\newcommand{\ignore}[1]{}
\newcommand\strikeout{\bgroup\markoverwith{\textcolor{red}{\rule[0.4ex]{1pt}{0.3ex}}}\ULon}
\def\mb{\mathbf}
\def\bm{\boldsymbol}
\newcommand{\Rmnum}[1]{\uppercase\expandafter{\romannumeral #1\relax}}
\def\mb{\mathbf}
\newcommand*\diff{\mathop{}\!\mathrm{d}}
\mathchardef\mhyphen="2D
\DeclareMathOperator{\Tr}{Tr}
\newtheorem{theorem}{Theorem}[section]
\newtheorem{lemma}[theorem]{Lemma}
\newtheorem{proposition}[theorem]{Proposition}
\newcommand{\revision}[1]{\color{black}#1 \color{black}}
\begin{document}
\preprint{}

\title{DeePN$^2$: A deep learning-based non-Newtonian hydrodynamic model}
\author{Lidong Fang}
\thanks{The first two authors contributed equally}
\affiliation{Department of Computational Mathematics, Science and Engineering, Michigan State University, MI 48824, USA}%
\author{Pei Ge}
\thanks{The first two authors contributed equally}
\affiliation{Department of Computational Mathematics, Science and Engineering, Michigan State University, MI 48824, USA}%
\author{Lei Zhang}
\affiliation{School of Mathematical Sciences, Institute of Natural Sciences and MOE-LSC, Shanghai Jiao Tong University, 800 Dongchuan Road, Shanghai 200240, China}%
\author{Weinan E}
\affiliation{Center for Machine Learning Research and School of Mathematical Sciences, Peking University, Beijing 100871, China}%
\affiliation{AI for Science Institute, Beijing 100080, China}%
\affiliation{Department of Mathematics and Program in Applied and Computational Mathematics, Princeton University,
NJ 08544, USA}%
\author{Huan Lei}
\email{leihuan@msu.edu}
\affiliation{Department of Computational Mathematics, Science and Engineering, Michigan State University, MI 48824, USA}%
\affiliation{Department of Statistics and Probability, Michigan State University, MI 48824, USA.}%


\begin{abstract}
A long standing problem in the  modeling of non-Newtonian hydrodynamics of polymeric flows is 
the availability of reliable and interpretable hydrodynamic
models that faithfully encode the underlying micro-scale polymer dynamics. 
The main complication arises from the long polymer relaxation time, the complex molecular structure and heterogeneous interaction. 
DeePN$^2$, a deep learning-based non-Newtonian hydrodynamic model,  has been proposed and  has shown some 
success in systematically passing the micro-scale structural mechanics information to the 
macro-scale hydrodynamics for suspensions with simple polymer conformation and bond potential.
The model retains a multi-scaled nature by mapping the polymer configurations into
a set of symmetry-preserving macro-scale features. 
The extended constitutive laws for these macro-scale features can be directly learned from the kinetics of
 their micro-scale counterparts. 
In this paper, we  develop DeePN$^2$ using more complex micro-structural models.
We show that DeePN$^2$ can faithfully capture the broadly overlooked viscoelastic differences arising from the specific molecular
structural mechanics without human intervention.
\end{abstract}

\pacs{}
\maketitle

\section{Introduction}
Accurate modeling of  non-Newtonian hydrodynamics plays a central role in the modeling of the transport, diffusion, and 
synthesis processes in many scientific and engineering applications. Unlike simple fluids,  non-Newtonian fluids may exhibit enormously 
complex flow behavior as a result of  the micro-scale polymer dynamics. 
In particular, the polymer relaxation time often becomes comparable to the hydrodynamic time scale.
As a result, the macro-scale fluid evolution can not be uniquely determined by the instantaneous flow field and the memory effect is generally important. 
To close the hydrodynamic equations, existing models 
are primarily based on the following two approaches. 
The first approach relies on empirical  constitutive models  \cite{Larson88,Owens_Phillips_2002}.
Notable examples  include the Hookean model \cite{Oldroyd_Wilson_PRSLA_1950, Energy_Variation_Lin_Liu_Zhang_CPAM_2005}, the FENE-P model \cite{Peterlin_Polymer_Science_1966, Bird_Doston_JNNFM_1980}, the Giesekus model \cite{Giesekus_JNNFM_1982}, and the 
  Phan-Thien and Tanner models \cite{PTT_JNNFM_1977}. 
Despite their popularity, the accuracy of these models is almost always in doubt.
The second approach resorts to  various sophisticated micro-macro coupling algorithms, e.g.,
 by directly solving the Fokker-Planck equation using
 Lattice Boltzmann method \cite{ammar2010lattice}, Galerkin method \cite{Fan_Acta_1989,lozinski2003fast,chauviere2004simulation,shen2012approximation}, and particle method \cite{carrillo2019blob, degond1990deterministic,lacombe1999presentation, wang2021particle, bao2021deterministic},   or 
sampling the polymer configuration via micro-scale simulations \cite{Laso_Ottinger_JNNFM_1993, Hulsen_Heel_JNNFM_1997, REN_E_HMM_complex_fluid_2005}. 
While the effects of the polymer interaction 
can be carried over to the macro-scale model, the computational cost can be exceedingly large due to
 the retaining of the micro-scale description. 
Methods based on  asymptotic analysis \cite{Warner_IECF_1972, Warner_PhD_1971} or the direct fitting of the strain-stress relationship \cite{Zhao_Li_JCP_2018} 
are limited to simple flows such as the steady flow. Several semi-analytical approaches have been proposed \cite{Grosso_Maffettone_JNNFM_2000, Feng_Leal_JoR_1998, WANG_JNNFM_1997_1, Forest_Zhou_Wang_JR_2003, FENE_L_S_JNNFM_1999, Yu_Du_mms_2005, Hyon_Du_mms_2008}
using moment closure to approximate the  micro-scale polymer configuration probability density function (PDF)  and
to derive the constitutive equations for the FENE
dumbbell solution \cite{FENE_L_S_JNNFM_1999, Yu_Du_mms_2005, Hyon_Du_mms_2008}.
However, these approaches are all based on restricted ansatz for the PDF and therefore are not reliable for more general flow regimes.

To construct  truly reliable and interpretable hydrodynamic models with  molecular-level fidelity, 
it is essential to be able to efficiently code the information from the micro-scale interaction into the macro-scale 
transport equations. Ideally, the construction should meet the following requirements:
\begin{itemize}
\item be interpretable;
\item be reliable  -- it should be accurate for all kinds of practical situations that one might encounter;
\item respect physical constraints, including symmetries and conservation laws;
\item be numerically robust and efficient. 
\end{itemize}

As a first step towards constructing models that meet these requirements,
we developed a machine learning-based approach \cite{Lei_Wu_E_2020}, ``deep learning-based non-Newtonian hydrodynamic model'' or DeePN$^2$, that learns the non-Newtonian
hydrodynamic model from the underlying micro-scale description of the dumbbell solution. 
Rather than approximating the closure with  standard moments, DeePN$^2$ 
finds a set of encoders, i.e., a set of macro-scale features that best represent the micro-scale dumbbell structure.
It also finds accurate closed-form equation for these macro-scale features. 
The constructed model retains a clear 
physical interpretation
and accurately captures the nonlinear viscoelastic responses, where the conventional Hookean and FENE-P models show limitations.

Beyond  dumbbell suspensions, one major challenge towards constructing truly reliable hydrodynamic models 
arises from the heterogeneous polymer micro-structural mechanics. 
%
In this work, we aim to fill the gap by developing the generalized DeePN$^2$ model for  multi-bead polymer molecules with arbitrary structure and interaction. 
Firstly, with the proper design of the generalized micro-macro encoders and the machine learning-based
symmetry-preserving constitutive dynamics, we demonstrate that the heterogeneous molecular structural-induced interaction can be systematically encoded into the macro-scale hydrodynamics. 
Unlike  moment closure approximations, the encoders are not designed to recover the 
high-dimensional configuration PDF.  Instead, they take an \emph{interpretable} form and are learned to probe the optimal approximation of the polymer stress and constitutive dynamics. 
This essential difference enables  DeePN$^2$ to circumvent the high-dimensionality of the polymer configuration PDF.
%
Secondly, the explicit form of the micro-macro encoders enables us to reliably learn the dynamics of the macro-scale features directly from the
kinetic equations of their micro-scale analog.
In this sense, this learning framework retains a multi-scaled nature 
where micro-scale interaction and physical constraints can be seamlessly inherited. 
Moreover,  the learning only requires instantaneous micro-scale samples. This unique property differs from the common sophisticated data-driven approaches \cite{Rudy_Kutz_Science_Ad_2017, schaeffer2018extracting, raissi2019physics, qin2019data, Han_Ma_PNAS_2019, Naoki_Takeshi_PRR_2020,Yu_E_Onsagernet_2020,Huang_Yong_JNET_2021}, where  time-derivative samples are often needed to learn the governing dynamics. 
This is particularly suited for multi-scale fluid models where  
accurate time-derivative samples may not be readily accessible. 
We demonstrate the power of the DeePN$^2$ model for polymer molecules of three distinct shapes with
training samples collected from one-dimensional (1D) homogeneous shear flow. Numerical results show that the broadly overlooked heterogeneous molecular structural mechanics plays an important role in the rheology of non-Newtonian fluids, which, fortunately, can be faithfully encoded into DeePN$^2$.
The constructed model successfully captures the hydrodynamics
with different viscoelastic responses for a variety of 1D and 2D flows when compared with the micro-scale simulation results.
The present work also paves the way towards constructing truly reliable non-Newtonian hydrodynamic models for general 3D flows.
 
 
\section{Methods}
\subsection{Micro-scale and continuum hydrodynamic models}
Let us start with the micro-scale description of the semi-dilute polymer suspension. We assume each molecule consists of $N$ particles 
with the position vector $\mb q = [\mb q_1; \mb q_2; \cdots; \mb q_N]$, where $\mb q_i \in \mathbb{R}^3$ is the position of the $i\mhyphen$th
particle. The intramolecular potential energy $V(\mb q)$ takes the form
\begin{equation}
V(\mb q) = \sum_{j=1}^{N_b} V_b\left(\vert \mb q_{j_1} - \mb q_{j_2}\vert\right), \quad 
V_b (l) =-\frac{k_s}{2}l^2_{0} \log\left[ 1-\frac{l^2}{l^2_{0}}\right],
\label{eq:micro-scale_model}
\end{equation}
where $N_b$ is the bond number and $(j_1, j_2)$ represents the indices of \revision{beads associated with} the $j\mhyphen$th bond.
Without  loss of generality, the individual bond interaction $V_b$ takes the form of the FENE potential \cite{Warner_FENE_1972}, where $k_s$ is the spring constant 
and $l_0$ is the maximum of the extension length. It is worth mentioning that the polymer molecule is not restricted to the
dumbbell shape. Instead, it generally consists of multiple particles with arbitrary structure and bond connection. 
Fig. \ref{fig:molecule_shape} shows a sketch of the polymer molecules with three different structures. As we will show, given the same
form of the individual bond interaction $V_b$, the different polymer micro-structural mechanics  leads to 
distinct non-Newtonian hydrodynamics.  

\begin{figure}[htbp]
\centering

\begin{tikzpicture}[x=0.75pt,y=0.75pt,yscale=-0.3,xscale=0.3]

\draw  [line width=0.75]  (200,226) .. controls (200,212.19) and (211.19,201) .. (225,201) .. controls (238.81,201) and (250,212.19) .. (250,226) .. controls (250,239.81) and (238.81,251) .. (225,251) .. controls (211.19,251) and (200,239.81) .. (200,226) -- cycle ;
\draw  [line width=0.75]  (301,276) .. controls (301,262.19) and (312.19,251) .. (326,251) .. controls (339.81,251) and (351,262.19) .. (351,276) .. controls (351,289.81) and (339.81,301) .. (326,301) .. controls (312.19,301) and (301,289.81) .. (301,276) -- cycle ;
\draw  [line width=0.75]  (351,426) .. controls (351,412.19) and (362.19,401) .. (376,401) .. controls (389.81,401) and (401,412.19) .. (401,426) .. controls (401,439.81) and (389.81,451) .. (376,451) .. controls (362.19,451) and (351,439.81) .. (351,426) -- cycle ;
\draw  [line width=0.75]  (351,125) .. controls (351,111.19) and (362.19,100) .. (376,100) .. controls (389.81,100) and (401,111.19) .. (401,125) .. controls (401,138.81) and (389.81,150) .. (376,150) .. controls (362.19,150) and (351,138.81) .. (351,125) -- cycle ;
\draw  [line width=0.75]  (201,376) .. controls (201,362.19) and (212.19,351) .. (226,351) .. controls (239.81,351) and (251,362.19) .. (251,376) .. controls (251,389.81) and (239.81,401) .. (226,401) .. controls (212.19,401) and (201,389.81) .. (201,376) -- cycle ;
\draw  [line width=0.75]  (451,226) .. controls (451,212.19) and (462.19,201) .. (476,201) .. controls (489.81,201) and (501,212.19) .. (501,226) .. controls (501,239.81) and (489.81,251) .. (476,251) .. controls (462.19,251) and (451,239.81) .. (451,226) -- cycle ;
\draw  [line width=0.75]  (500,325) .. controls (500,311.19) and (511.19,300) .. (525,300) .. controls (538.81,300) and (550,311.19) .. (550,325) .. controls (550,338.81) and (538.81,350) .. (525,350) .. controls (511.19,350) and (500,338.81) .. (500,325) -- cycle ;
\draw [line width=0.75]    (225,226) -- (376,125) ;
\draw [line width=0.75]    (525,325) -- (476,226) ;
\draw [line width=0.75]    (226,376) -- (376,426) ;
\draw [line width=0.75]    (225,226) -- (326,276) ;
\draw [line width=0.75]    (326,276) -- (476,226) ;
\draw [line width=0.75]    (326,276) -- (376,426) ;
\draw  [line width=0.75]  (-249,226) .. controls (-249,212.19) and (-237.81,201) .. (-224,201) .. controls (-210.19,201) and (-199,212.19) .. (-199,226) .. controls (-199,239.81) and (-210.19,251) .. (-224,251) .. controls (-237.81,251) and (-249,239.81) .. (-249,226) -- cycle ;
\draw  [line width=0.75]  (-148,276) .. controls (-148,262.19) and (-136.81,251) .. (-123,251) .. controls (-109.19,251) and (-98,262.19) .. (-98,276) .. controls (-98,289.81) and (-109.19,301) .. (-123,301) .. controls (-136.81,301) and (-148,289.81) .. (-148,276) -- cycle ;
\draw  [line width=0.75]  (-98,426) .. controls (-98,412.19) and (-86.81,401) .. (-73,401) .. controls (-59.19,401) and (-48,412.19) .. (-48,426) .. controls (-48,439.81) and (-59.19,451) .. (-73,451) .. controls (-86.81,451) and (-98,439.81) .. (-98,426) -- cycle ;
\draw  [line width=0.75]  (-98,125) .. controls (-98,111.19) and (-86.81,100) .. (-73,100) .. controls (-59.19,100) and (-48,111.19) .. (-48,125) .. controls (-48,138.81) and (-59.19,150) .. (-73,150) .. controls (-86.81,150) and (-98,138.81) .. (-98,125) -- cycle ;
\draw  [line width=0.75]  (-248,376) .. controls (-248,362.19) and (-236.81,351) .. (-223,351) .. controls (-209.19,351) and (-198,362.19) .. (-198,376) .. controls (-198,389.81) and (-209.19,401) .. (-223,401) .. controls (-236.81,401) and (-248,389.81) .. (-248,376) -- cycle ;
\draw  [line width=0.75]  (2,226) .. controls (2,212.19) and (13.19,201) .. (27,201) .. controls (40.81,201) and (52,212.19) .. (52,226) .. controls (52,239.81) and (40.81,251) .. (27,251) .. controls (13.19,251) and (2,239.81) .. (2,226) -- cycle ;
\draw  [line width=0.75]  (51,325) .. controls (51,311.19) and (62.19,300) .. (76,300) .. controls (89.81,300) and (101,311.19) .. (101,325) .. controls (101,338.81) and (89.81,350) .. (76,350) .. controls (62.19,350) and (51,338.81) .. (51,325) -- cycle ;
\draw [line width=0.75]    (-224,226) -- (-73,125) ;
\draw [line width=0.75]    (76,325) -- (27,226) ;
\draw [line width=0.75]    (-223,376) -- (-73,426) ;
\draw [line width=0.75]    (-224,226) -- (-223,376) ;
\draw [line width=0.75]    (-123,276) -- (27,226) ;
\draw [line width=0.75]    (-123,276) -- (-73,426) ;
\draw  [line width=0.75]  (651,226) .. controls (651,212.19) and (662.19,201) .. (676,201) .. controls (689.81,201) and (701,212.19) .. (701,226) .. controls (701,239.81) and (689.81,251) .. (676,251) .. controls (662.19,251) and (651,239.81) .. (651,226) -- cycle ;
\draw  [line width=0.75]  (752,276) .. controls (752,262.19) and (763.19,251) .. (777,251) .. controls (790.81,251) and (802,262.19) .. (802,276) .. controls (802,289.81) and (790.81,301) .. (777,301) .. controls (763.19,301) and (752,289.81) .. (752,276) -- cycle ;
\draw  [line width=0.75]  (802,426) .. controls (802,412.19) and (813.19,401) .. (827,401) .. controls (840.81,401) and (852,412.19) .. (852,426) .. controls (852,439.81) and (840.81,451) .. (827,451) .. controls (813.19,451) and (802,439.81) .. (802,426) -- cycle ;
\draw  [line width=0.75]  (802,125) .. controls (802,111.19) and (813.19,100) .. (827,100) .. controls (840.81,100) and (852,111.19) .. (852,125) .. controls (852,138.81) and (840.81,150) .. (827,150) .. controls (813.19,150) and (802,138.81) .. (802,125) -- cycle ;
\draw  [line width=0.75]  (652,376) .. controls (652,362.19) and (663.19,351) .. (677,351) .. controls (690.81,351) and (702,362.19) .. (702,376) .. controls (702,389.81) and (690.81,401) .. (677,401) .. controls (663.19,401) and (652,389.81) .. (652,376) -- cycle ;
\draw  [line width=0.75]  (902,226) .. controls (902,212.19) and (913.19,201) .. (927,201) .. controls (940.81,201) and (952,212.19) .. (952,226) .. controls (952,239.81) and (940.81,251) .. (927,251) .. controls (913.19,251) and (902,239.81) .. (902,226) -- cycle ;
\draw  [line width=0.75]  (951,325) .. controls (951,311.19) and (962.19,300) .. (976,300) .. controls (989.81,300) and (1001,311.19) .. (1001,325) .. controls (1001,338.81) and (989.81,350) .. (976,350) .. controls (962.19,350) and (951,338.81) .. (951,325) -- cycle ;
\draw [line width=0.75]    (676,226) -- (827,125) ;
\draw [line width=0.75]    (976,325) -- (927,226) ;
\draw [line width=0.75]    (677,376) -- (827,426) ;
\draw [line width=0.75]    (676,226) -- (777,276) ;
\draw [line width=0.75]    (777,276) -- (927,226) ;
\draw [line width=0.75]    (777,276) -- (827,426) ;
\draw [line width=0.75]  [dash pattern={on 4.5pt off 4.5pt}]  (927,226) -- (676,226) ;
\draw [line width=0.75]  [dash pattern={on 4.5pt off 4.5pt}]  (927,226) -- (827,426) ;
\draw [line width=0.75]  [dash pattern={on 4.5pt off 4.5pt}]  (827,426) -- (676,226) ;

\end{tikzpicture}

\caption{A sketch of 7-bead polymer molecules with chain-, star- and net-shaped structures (from left to right). The solid lines represent the FENE bond potential with the \emph{same}
interaction parameters. 
The dashed lines of the net-shaped molecule represent the three additional side chains connecting the polymer arms.
While both the chain- and the star-shaped molecules are connected with six bonds; the suspensions exhibit different hydrodynamics due to the different micro-structural mechanics as shown below.}
\label{fig:molecule_shape}
\end{figure}
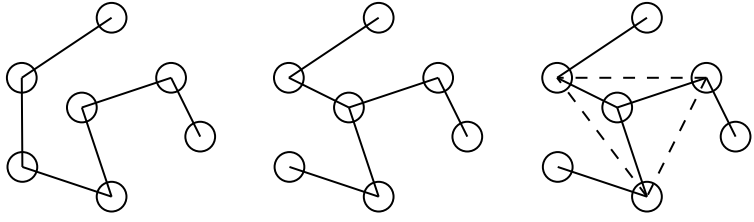

In principle, the  viscoelastic response of the system is  determined by the full micro-scale interaction. However, 
 direct simulation for the full micro-scale interaction is often limited by the 
prohibited computational cost. Continuum hydrodynamics models based on various empirical constitutive
models are often used, with the general form
\begin{equation}
\begin{split}
\nabla \cdot \mb u &= 0, \\
\rho \frac{\textrm d \mb u}{\textrm d t} &= -\nabla p + \nabla \cdot (\bm\tau_{\textrm s} + \bm\tau_{\textrm p}) + \mb f_{\textrm{ext}},
\end{split}
\label{eq:momentum_transport_close}
\end{equation}
where $\rho$, $\mb u$ and $p$ represent the fluid density, velocity and pressure field, respectively. $\mb f_{\textrm{ext}}$ is the external body force and 
$\bm\tau_{\textrm s} = \eta_{\textrm s}(\nabla \mb u + \nabla \mb u^T)$ is the solvent stress tensor with shear viscosity $\eta_s$. 
$\bm\tau_{\textrm p}$ is the polymer stress tensor whose detailed form
is generally unknown.  To construct $\bm\tau_{\textrm p}$, the DeePN$^2$ model seeks the approximation in terms of a set of macro-scale features $\mb c_1, \cdots, \mb c_n$, and simultaneously, the constitutive dynamics of these features, i.e.,
\begin{subequations}
\begin{align}
\bm\tau_{\textrm p} &= \mb G(\mb c_1, \cdots, \mb c_n), 
\label{eq:DeePN2_G}\\
\frac{\mathcal{D}\mb c_i}{\mathcal{D}t} &= \mb H_i(\mb c_1, \cdots, \mb c_n), \quad i = 1, \cdots, n,
\label{eq:DeePN2_H}
\end{align}
\label{eq:DeePN2_constitutive}%
\end{subequations}
where $\mb G$ and $\mb H_i$ represent the stress and constitutive models, respectively. $\frac{\mathcal{D}}{\mathcal{D}t}$ denotes
 the objective tensor derivative. 

 Eqs. \eqref{eq:momentum_transport_close} and \eqref{eq:DeePN2_constitutive} take the form similar to the conventional hydrodynamics.
 Instead of using empirical approximation to close the equation, 
%
we aim to construct a  model  
directly 
from the micro-scale description \eqref{eq:micro-scale_model} with the help of machine learning,
such that the constructed model can naturally encode the 
molecular-specific interaction beyond  empirical approximations with clear physical interpretation.

\subsection{DeePN$^2$ for arbitrary molecular structural mechanics}
To learn Eq. \eqref{eq:momentum_transport_close} from the full model  \eqref{eq:micro-scale_model}, one essential problem lies in how
to seamlessly pass the micro-scale interaction to the continuum model. To bridge the scales, we learn a set of micro-to-macro 
encoders, denoted by $\left\{\mb b_i(\mb q)\right\}_{i=1}^{n}$, 
such that the continuum modeling terms (e.g., the polymer stress $\bm\tau _{\textrm p}$) can be  well approximated in terms of the corresponding
macro-scale features $\left\{\mb c_i(\mb q)\right\}_{i=1}^{n}$ via Eq. \eqref{eq:DeePN2_G},
where $\bm\tau_{\textrm p} := n_{\textrm p} \sum_j \langle \mb q_j \otimes \nabla_{\mb q_j} V(\mb q)\rangle$,  
$\mb c_i = \left\langle \mb b_i(\mb q)\right\rangle$, $n_{\textrm p}$ is the polymer number density 
and $\left\langle \cdot \right\rangle$ denotes the average with respect to the configuration PDF. In particular, the features
$\mb c_i$ need to satisfy the proper invariant and symmetry conditions inherited from the encoders $\mb b_i(\cdot)$ such that 
the constructed continuum model can strictly preserve  frame-indifference condition: 
\begin{equation}
\widetilde{\bm\tau_{\textrm p}} = \mb Q \bm\tau_{\textrm p} \mb Q^T, \quad 
\mb G(\widetilde{\mb c}_1, \cdots, \widetilde{\mb c}_n) = \mb Q \mb G(\mb c_1, \cdots, \mb c_n)\mb Q^T,
\label{eq:symmetry_G}
\end{equation}
where the superscript $\widetilde{\cdot}$ denotes the corresponding values under an arbitrary orthogonal  transformation 
by 
$\mb Q \in \textrm{SO}(3)$.

To construct the encoder $\mb b(\cdot)$, we note that the micro-scale potential $V(\mb q)$ is translational and rotational invariant.
Accordingly, let $\mb r^{\ast} (\mb q) \in \mathbb{R}^{3N-6}$ \revision{(we consider the general case $N\geq3$ here)} denote the \revision{translational-}rotational-invariant configuration vector and $\mb r (\mb q) \in \mathbb{R}^{3N-3}$
denote the translational-invariant configuration vector consisting of $N-1$ linearly independent position vectors. Since $N_b \ge N-1$ for all
molecules, one straightforward choice is the first $N-1$ bond connection vectors, i.e., 
\begin{equation}
\begin{split}
\mb r&= \left[\mb r_1; \mb r_2; \cdots; \mb r_{N-1}\right], \quad  \mb r_j = \mb q_{j_1} - \mb q_{j_2}, \quad 1\le j \le  N-1, \\
\mathbf{r}^{\ast}
&=
\left[
\left| \mathbf{r} _{1} \right|,
\left| \mathbf{r} _{2} \right|,
\left| \mathbf{r} _{12} \right|,
\left| \mathbf{r} _{3} \right|,
\left| \mathbf{r} _{13} \right|,
\left| \mathbf{r} _{23} \right|,
\left| \mathbf{r} _{4} \right|,
\left| \mathbf{r} _{24} \right|,
\left| \mathbf{r} _{34} \right|,
\cdots,
\left| \mathbf{r} _{N-1} \right|,
\left| \mathbf{r} _{(N-2)(N-1)} \right|
\right],
\end{split}
\label{eq:r_r_ast}
\end{equation}
where $\mathbf{r}_{jk} := \mathbf{r}_j - \mathbf{r}_k$.
We note that this form  applies to  general molecular structures; $\mb r$ determines the molecular structure up to translations. Specifically, $\mb r^{\ast}$ represents the $3N-6$  degrees of freedom after eliminating  translational and rotational 
degrees of freedom, and 
$\mb r$ suffices to fully determine the translational invariant polymer configuration and strictly retains the rotational symmetry in 
accordance with $\mb q$, i.e., 
\begin{equation}
\mb r_j(\mb Q\mb q) = \mb Q \mb r_j(\mb q), \quad  \mb r^{\ast}(\mb Q\mb q) = \mb r^{\ast}(\mb q).
\nonumber
\end{equation}

To preserve  rotational symmetry, one straightforward approach is to represent $\mb b(\cdot)$ in the linear space spanned by $\left\{\mb r_j\right\}_{j=1}^{N-1}$. However, this choice
yields the trivial macro-scale feature, i.e., $\left\langle \mb r_j\right\rangle \equiv 0$, due to the rotational symmetry.
Alternatively, we construct the following second-order tensor 
\begin{equation}
\begin{split}
\mb c_i &= \langle \mb b_i(\mb r) \rangle, \quad \mb b_i = \mb f_i\mb f_i^T, \quad 
 \quad 1\le i \le n, \\
\mb f_i &= g_i(\mb r^{\ast}) \sum_{j=1}^{N-1} w_{ij} \mb r_j,\\
\end{split}
\label{eq:encoder_model}
\end{equation}
where 
$[w_{ij}]_{1\le i \le n,  1 \le j \le N-1}$ are the weights and $\{ g_{i}(\cdot) \} _{i=1}^n$ is a set of scalar functions that encodes the polymer 
intramolecular interaction. Both terms will be learned from the micro-scale description and represented by deep neural networks (DNNs). 
Rotational symmetries can be naturally inherited, i.e., $\widetilde{\mb c} = \left\langle \mb b(\widetilde{\mb r})\right\rangle \equiv \mb Q \mb c\mb Q^T$.
Compared with the special form for dumbbell molecules  in Ref. \cite{Lei_Wu_E_2020},  Eq. \eqref{eq:encoder_model} provides a general form of $\mb c$
applicable to multi-bead molecules of arbitrary structure since $\mb r$ and $\mb r^{\ast}$ fully determine the $3N-3$ translational
invariant polymer configuration. In the remaining of the paper, we will abuse the notation and denote $\mb b(\mb q)$ as $\mb b(\mb r)$.

Besides the polymer stress model \eqref{eq:DeePN2_G}, the remaining task to close Eq. \eqref{eq:momentum_transport_close} is the construction of the constitutive 
dynamics \eqref{eq:DeePN2_H} of the macro-scale features $\left\{\mb c_i\right\}_{i=1}^n$. There are two issues to deal with: the proper form of the objective time derivative of $\mb c_i$ and 
the accurate estimation of their time evolution. In the literature, the objective tensor derivative, denoted by $\frac{\mathcal{D} \mb c_i}{\mathcal{D} t}$, 
is often chosen to take some heuristic forms (e.g. the convected \cite{Oldroyd_Wilson_PRSLA_1950} and corotational \cite{Zaremba_1903} forms). 
Moreover, the time-series samples collected from the micro-scale simulations are generally super-imposed with pronounced sampling error; 
direct estimation of the time derivative  as was done in \cite{Rudy_Kutz_Science_Ad_2017, raissi2019physics,  Naoki_Takeshi_PRR_2020} will end with  noisy data. 
Fortunately, both challenges are 
addressed in DeePN$^2$ using an explicit micro-macro correspondence. The dynamics of $\mb c_i$ can
be derived from the its micro-scale correspondence $\mb b_i(\mb r)$   in the form of the micro-scale 
configuration $\mb r$, i.e.,
\begin{equation}
\begin{split}
\frac{\textrm d}{\textrm dt} \mb c_i - \bm\kappa:\left\langle \sum_{j=1}^{N-1}
\mb r_j \otimes \nabla_{\mb r_j}\otimes \mb b_i \right\rangle
 &= 
\frac{k_BT}{\gamma} \left\langle \sum_{j,k=1}^{N-1}  A_{jk} \nabla_{\mb r_j} \cdot 
\nabla_{\mb r_k} \mb b_i \right\rangle \\
&- 
\frac{1}{\gamma} \left\langle \sum_{j=1}^{N-1} \sum_{k=1}^{N_b}  A_{jk} \nabla_{\mb r_k} V(\mb r_1, \cdots, \mb r_{N_b}) \cdot \nabla_{\mb r_j} 
\mb b_i \right\rangle,
\end{split}
\label{eq:FK_B_evoluation_multi}
\end{equation}
where $\bm\kappa := \nabla \mb u^T$, 
$\gamma$ is the friction coefficient and $\mb r_j$ is the connection vector as defined in 
Eq. \eqref{eq:r_r_ast} for $j > N-1$. We abuse the notation and denote $ V(\mb q)$ as $ V(\mb r_1, \cdots, \mb r_{N_b}) = \sum_{j=1}^{N_b} V_b(r_j)$.  The molecular structure and interaction are specified via 
$\mb A \in \mathbb{R}^{N_b\times N_b}$, which is defined by 
\begin{equation}
  \mb A = \mb S \mb S^T, \quad S_{jk} = \left\{
  \begin{array}{ll}
    +1, & k = {j_1}, \\
    -1, & k = {j_2}, \\
    0, & \text{else}
  \end{array}
  \right.
  \quad 1 \le j \le N_b, \quad 1 \le k \le N,
\end{equation}
\revision{where $j_1$ and $j_2$ are the same notations as those in Eq. \eqref{eq:micro-scale_model}.}
We note that Eq. \eqref{eq:FK_B_evoluation_multi} only requires the first $(N-1)$ rows of $\mb A$ since the polymer configuration can be fully determined by $\mb r_1, \cdots, \mb r_{N-1}$. As a special case, if the molecule takes the 
chain shape, $\mb A$ recovers the standard Rouse matrix \cite{Bird_Curtiss_book_vol_2,Rouse_JCP_1953}.

Eq. \eqref{eq:FK_B_evoluation_multi} defines the dynamics for the features $\left\{\mb c_i\right\}_{i=1}^n$,
 derived from  their micro-scale correspondences. In particular,
given the proposed form of the encoder functions \eqref{eq:encoder_model}, 
we can show that  the two combined terms of the left-hand-side of Eq. \eqref{eq:FK_B_evoluation_multi} strictly preserve  rotational symmetry (see Appendix \ref{sec:proof_rotation_dynamics}). This 
leads to an important observation that the two combined terms provide the generalized form for the macro-scale objective tensor derivative
$\frac{\mathcal{D} \mb c_i}{\mathcal{D} t}$. Unlike the heuristic choices in empirical models, the new form retains a clear micro-scale
physical interpretation. Furthermore, all the modeling terms in the form of $\left\langle \cdot \right\rangle$ 
can be directly evaluated using samples collected from the micro-scale simulations under the corresponding flow condition.
This enables us to avoid estimating the time derivative values from the noise-prone time-series samples. Accordingly, the macro-scale constitutive
dynamics takes the form 
\begin{equation}
\frac{\diff \mb c_i}{\diff t} - \bm\kappa : \mathcal{E}_i = \frac{k_BT}{\gamma} \mb H_{1,i}(\mb c_1, \cdots, \mb c_n) - \frac{1}{\gamma} \mb H_{2,i}(\mb c_1, \cdots, \mb c_n),
\label{eq:c_constituitve}
\end{equation}
where the individual terms will be represented by proper neural networks and parameterized by matching their micro-scale correspondences, i.e.,
\begin{equation}
\begin{split}
\mathcal{E}_i(\mb c_1, \cdots, \mb c_n) &= \left\langle \sum_{j=1}^{\revision{N-1}} \mb r_j \otimes\nabla_{\mb r_j}\otimes \mb b_i \right\rangle, \\
\mb H_{1,i}(\mb c_1, \cdots, \mb c_n) &= \left\langle \sum_{j,k=1}^{N-1}  A_{jk} \nabla_{\mb r_j} \cdot 
\nabla_{\mb r_k} \mb b_i \right\rangle, \\
\mb H_{2,i}(\mb c_1, \cdots, \mb c_n) &= \left\langle \sum_{j=1}^{N-1} \sum_{k=1}^{N_b}  A_{jk} \nabla_{\mb r_k} V(\mb r_1, \cdots, \mb r_{N-1}) \cdot \nabla_{\mb r_j} 
\mb b_i \right\rangle .
\end{split}
\end{equation}

\subsection{Symmetry-preserving DNN models}
To complete the DeePN$^2$ model, we need to specify the DNN models.
These DNN models should also strictly preserve  rotational symmetry.
Different from the rotational-invariant scalar stress model considered in Ref. \cite{Han_Xiao_CMAME_2022}, the second-order tensors $\mb G$, $\mb H_{1,i}$, $\mb H_{2,i}$ need to satisfy the symmetry condition \eqref{eq:symmetry_G} and the fourth-order tensors $\mathcal{E}_i$ need to retain the objectivity of $\frac{\mathcal{D} \mb c_i}{\mathcal{D} t}$. 
However, there does not exist such a reference frame in which these symmetry constraints can be satisfied by the
macro-scale modeling terms.

To handle this problem, we consider the eigen-space of the feature $\mb c_1$ with a fixed form of the encoder 
$\mb b_1(\cdot)$, e.g., by setting $g_1(\cdot) = w_{1,:} \equiv 1$
\revision{and let other $\mb b_i(\cdot)$ involved in the training}. 
Let us consider the eigen-decomposition $\mb c_1 = \mb U\Lambda \mb U^T$, assuming that it has distinct eigenvalues. 
We introduce the following matrices
\begin{equation}
S^{(1)} = \begin{pmatrix} +1 & & \\
 &+1 & \\
 & &+1
\end{pmatrix}, 
S^{(2)} = 
\begin{pmatrix} +1 & & \\
 &-1 & \\
 & &+1
\end{pmatrix},
S^{(3)} = 
\begin{pmatrix} +1 & & \\
 &+1 & \\
 & &-1
\end{pmatrix}, 
S^{(4)} = 
\begin{pmatrix} +1 & & \\
 &-1 & \\
 & &-1
\end{pmatrix}.
\nonumber
\end{equation}
We denote $\mb U^{(j)} = \mb U S^{(j)}$ and ${{\hat{\mb c}}_i}^{(j)} = {{\mb U}^{(j)}}^T {\mb c}_i {\mb U}^{(j)}$. We can show that the formulation of the stress model
$\mb G = \revision{\frac{1}{4}} \sum_{j=1}^4 \mb U^{(j)} \hat{\mb G}(\hat{\mb c}_1^{(j)}, \cdots,  \hat{\mb c}_n^{(j)}) {\mb U^{(j)}}^T$ satisfies 
Eq. \eqref{eq:symmetry_G} (see Appendix \ref{sec:symmetry_preserving_DNN}). 

During simulation, the eigenvalues of $\mb c_1$ may cross each other. To account for this, we consider all the 
$6$ permutations of the three eigenvalues, i.e., 
\begin{equation}
\mb G(\mb c_1, \cdots, \mb c_n) = 
\revision{\frac{1}{24}}
 \sum_{k=0}^5 \sum_{j=1}^4 \mb U^{(j,k)} \hat{\mb G}
(\hat{\mb c}_{1}^{(j,k)}, \cdots,  \hat{\mb c}_{n}^{(j,k)}) {\mb U^{(j,k)}}^T,
\label{eq:permutation_G}
\end{equation}
where $k$ represents the rank of  permutation (e.g., in lexicographical order)
and $\mb U^{(j,k)}$ is a variation of $\mb U^{(j)} $ with corresponding column permutation. 
Furthermore, to avoid 
the eigenvector degeneracy, we set a threshold value $\epsilon$ for the eigenvalues. When two eigenvalues approach each other, e.g.,
$\vert \lambda_2 - \lambda_3 \vert < \epsilon$, \revision{we freeze all the eigenvectors until $\vert \lambda_2 - \lambda_3 \vert \geq \epsilon$}. In this work, we take $\epsilon = 10^{-3}$, and we refer to Appendix \ref{sec:symmetry_validation}
for detailed numerical studies. 

Eq. \eqref{eq:permutation_G} provides the rotation-symmetric form for the second-order stress tensor $\mb G$, where $\hat{\mb G}$
is represented by DNNs. The constitutive model terms $\mb H_{1,i}$ and $\mb H_{2,i}$ can be constructed in a similar manner.
Finally, we can show the fourth-order tensors $\left\{\mathcal{E}_i\right\}_{i=1}^n$ associated with the encoders \eqref{eq:encoder_model}
 can be constructed in the form 
\begin{equation}
\bm\kappa :\mathcal{E}_i = \bm\kappa \mb c_i + \mb c_i \bm\kappa^T + \bm\kappa :\left(\sum_{j=1}^9 {\mb E}_{1,i}^{(j)}\otimes {\mb E}_{2,i}^{(j)}\right),
\label{eq:permutation_E}
\end{equation}
where ${\mb E}_{1,i}^{(j)}$ and ${\mb E}_{2,i}^{(j)}$ are second-order tensors which respect the symmetry condition \eqref{eq:symmetry_G} and can be constructed 
in the form of Eq. \eqref{eq:permutation_G} (see Appendix \ref{sec:symmetry_preserving_DNN}). 
The constructed DeePN$^2$ model takes the form similar to 
the general hydrodynamic equations \eqref{eq:momentum_transport_close} and \eqref{eq:DeePN2_constitutive}, where some of the
 model terms are represented by DNNs in the form of Eqs. \eqref{eq:permutation_G} and \eqref{eq:permutation_E}.

\subsection{Algorithm}
\revision{We summarize the DeePN$^2$ model in  Algorithm \ref{alg:DeePN2}.
\begin{algorithm}
\hrulefill
\caption{DeePN$^2$ for polymer suspensions retaining micro-structural fidelity.}
\label{alg:DeePN2}
\begin{algorithmic}[1]
\State Conduct the micro-scale simulations (see Appendix  \ref{sec:micro_scale}) and collect time-discrete training samples (see Appendix \ref{sec:training_samples}).
\State Pre-process the training samples by pre-computing the first conformation tensor $\mb c_1 = \left\langle \mb b_1(\mb r)\right\rangle$, its eigen-decomposition, and the polymer stress based on the micro-scale polymer configurations for each training sample set.
\State End-to-end training: Establish the joint learning of the symmetry-preserving encoders ($\mb b_2(\cdot), \cdots, \mb b_n(\cdot)$) (see Appendix \ref{sec:symmetry_preserving_DNN}), the macro-scale DNN functions (the stress $\mb G(\cdot)$ and the other constitutive modeling terms $\left\{\mathcal{E}_i(\cdot), \mb{H}_{1,i}(\cdot), \mb{H}_{2,i}(\cdot)\right\}_{i=1}^n$) by matching the micro-scale counterparts derived from the structure-specific kinetic equations (see Appendix \ref{sec:training}).
\State Solve the macro-scale hydrodynamic Eqs. \eqref{eq:momentum_transport_close}  and \eqref{eq:c_constituitve}.
\end{algorithmic}
\hrulefill
\end{algorithm}
}

\section{Numerical results}
The present DeePN$^2$ model is trained using micro-scale samples collected from the homogeneous shear flow. We demonstrate the model accuracy and generalization 
ability by considering various flows in comparison with the results of the micro-scale simulations for the suspensions with three different polymer structural models as shown in Fig. \ref{fig:molecule_shape}. As we will see, the micro-scale structure does play an important role in the viscoelastic response. We will use this to examine the DeePN$^2$ model fidelity.

First, we consider the reverse Poiseuille flow in a $60\times 100\times 60$ domain (in reduced unit) 
with the opposite body force $\mb f_{\textrm{ext}} = (0.016, 0, 0)$ applied to each
half of the domain divided by the plane $y = 50$ starting from $t = 0$. At $t = 800$, the external force is removed. The relaxation process of the flow field
is recorded until the total simulation time $t = 1600$. For all the three systems, the predictions from DeePN$^2$ agree well with the micro-scale simulations 
results, as shown in Fig. \ref{fig:RPF_flow_development}. In particular, the flow velocity fields of the three
systems are nearly identical at the initial stage $t \in [0, 200]$, as the development of the flow field is dominated by the solvent 
and the near-equilibrium responses of the polymer molecules in this regime. Starting from $t = 250$, the velocity fields of the three systems
exhibit distinct evolution processes. The velocity of the chain-shaped molecule suspension exhibits the largest oscillation and the longest development stage during $t \in [250, 800]$. 
In contrast, the velocity of the star-shaped molecule suspension exhibits moderate oscillation and shows an apparent increase during $t \in [400, 800]$, indicating that
the polymer elastic energy reaches a plateau  earlier than the chain-shaped system. Moreover, the velocity of the net-shaped molecule suspension exhibits the smallest 
oscillation, indicating that the three additional side-chains further affect the rheological properties of the polymer suspension.

\begin{figure}[htbp]
\centering
\includegraphics[scale=0.25]{./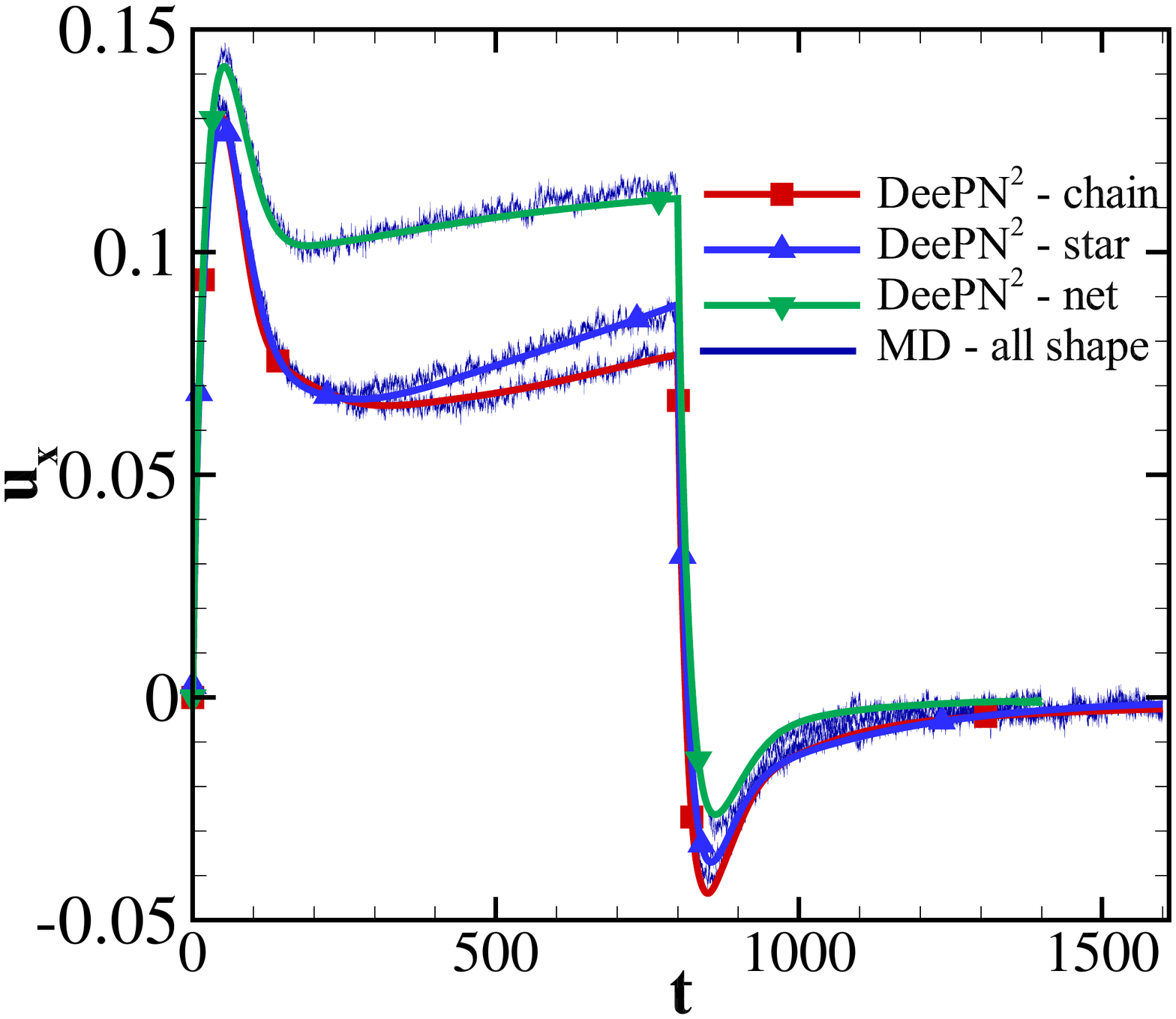}
\includegraphics[scale=0.25]{./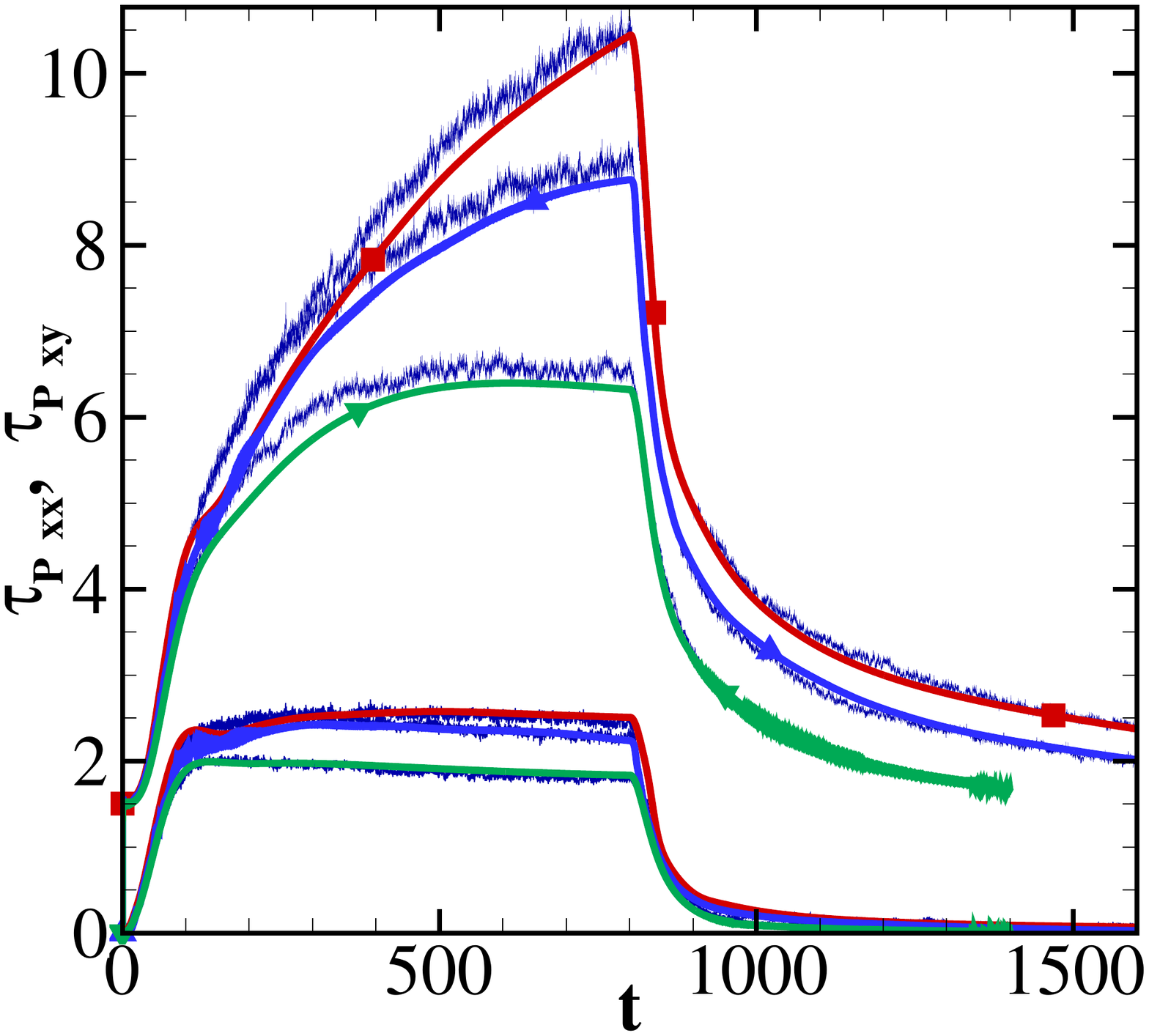}
\caption{The velocity $ u_x$ (left) and  polymer stress $\bm \tau _{\textrm p}$ (right)  of the reverse Poiseuille flow ($y = 6$) of the polymer suspensions of three different molecule structures shown in Fig. \ref{fig:molecule_shape}. $\bm \tau _{\textrm p}$ is normalized by polymer number density $n_{\textrm{p}}$, i.e., it is the stress energy per polymer (the same for the remaining figures). With the same FENE bond, the polymer suspensions exhibit different flow responses due to the different molecule structural mechanics. \revision{The dark blue lines with rough oscillations} denote the micro-scale simulation results; the solid lines with symbols denote the DeePN$^2$ predictions. }
\label{fig:RPF_flow_development}
\end{figure}

Such differences can also be studied by examining the polymer stress development.  As shown in
Fig. \ref{fig:RPF_flow_development},  the value of ${\bm \tau_{\textrm p}}_{xx}$ for the chain-shaped molecule suspension keeps increasing through the 
development stage $t \in [0, 800]$ while for the star-shaped molecule, 
${\bm \tau_{\textrm p}}_{xx}$  shows only a moderate increase. In contrast, the net-shaped molecule suspension reaches steady state at about 
$t = 400$. Moreover, the steady value of the shear stress ${\bm \tau _{\textrm{p}}}_{xy}$ of the chain-shaped molecule is also larger than the star-shaped and 
the net-shaped molecules, indicating the largest restored elastic energy. This result is also consistent with the larger velocity 
oscillation from the minimal values to $0$ during the relaxation process with $t \in [800, 1000]$.

The different rheological properties of the three polymer suspensions can be understood as follows. 
Although both the chain-shaped and star-shaped molecules 
have \emph{$6$ identical} FENE bonds, the chain-shaped molecule is less symmetric than the star-shaped molecule. Accordingly, it shows 
larger dispersion in the $\mathbb{R}^{18}$ configuration space, and hence, is more flexible than the star-shaped molecule. The 
elastic response time of the chain-shaped molecule suspension is longer than that of the 
star-shaped molecule suspension; larger elastic energy can be restored during the relaxation
stage. On the other hand, the net-shaped molecule is more rigid than the star-shaped molecule due to the additional bond interaction. 

Another important feature of non-Newtonian fluids is the hysteresis effect.  Classical models such as Hookean and FENE-P cannot 
 capture such effects \cite{Dolye_Shaqfeh_1998, Lielens_FENE_L_1998}. Fig. \ref{fig:RPF_stress_c_loop} shows the evolution of the polymer stress and conformation 
tensor for the chain- and star-shaped molecule suspensions.  The clockwise loops show the hysteresis effects during the development and relaxation processes; the non-unique stress 
values indicate that linear and mean field approximations are insufficient in 
describing the viscoelastic response of the system. 
In contrast, these effects are accurately captured with the DeePN$^2$ model.
Similar to  Fig. \ref{fig:RPF_flow_development}, the chain-shaped molecule suspension shows more pronounced 
hysteresis effect due to the larger dispersion in the configuration space, reflected as the larger ``loop area'' than the results for 
star-shaped molecule suspension. 

\begin{figure}[htbp]
\centering
\includegraphics[scale=0.25]{./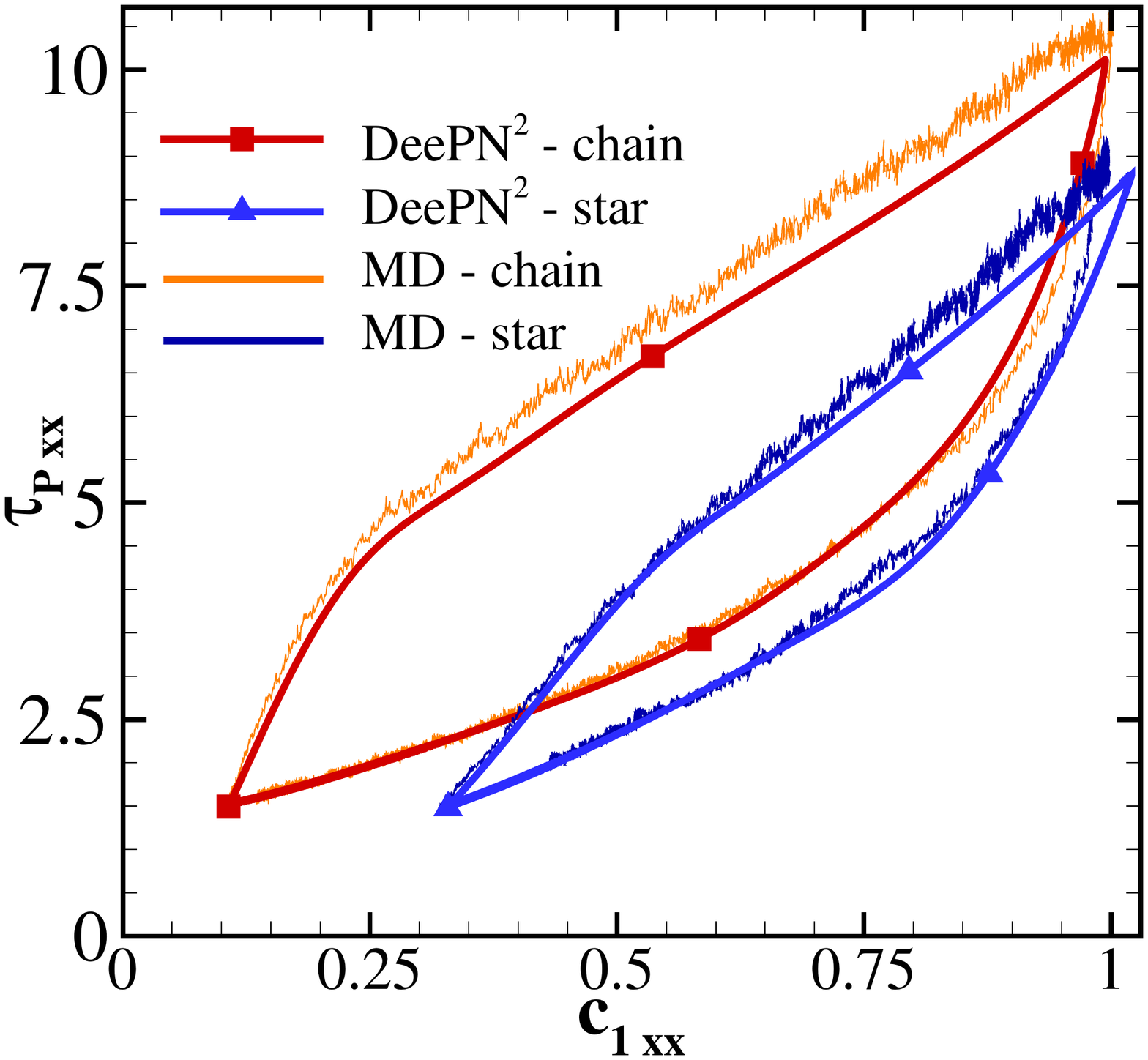}
\includegraphics[scale=0.25]{./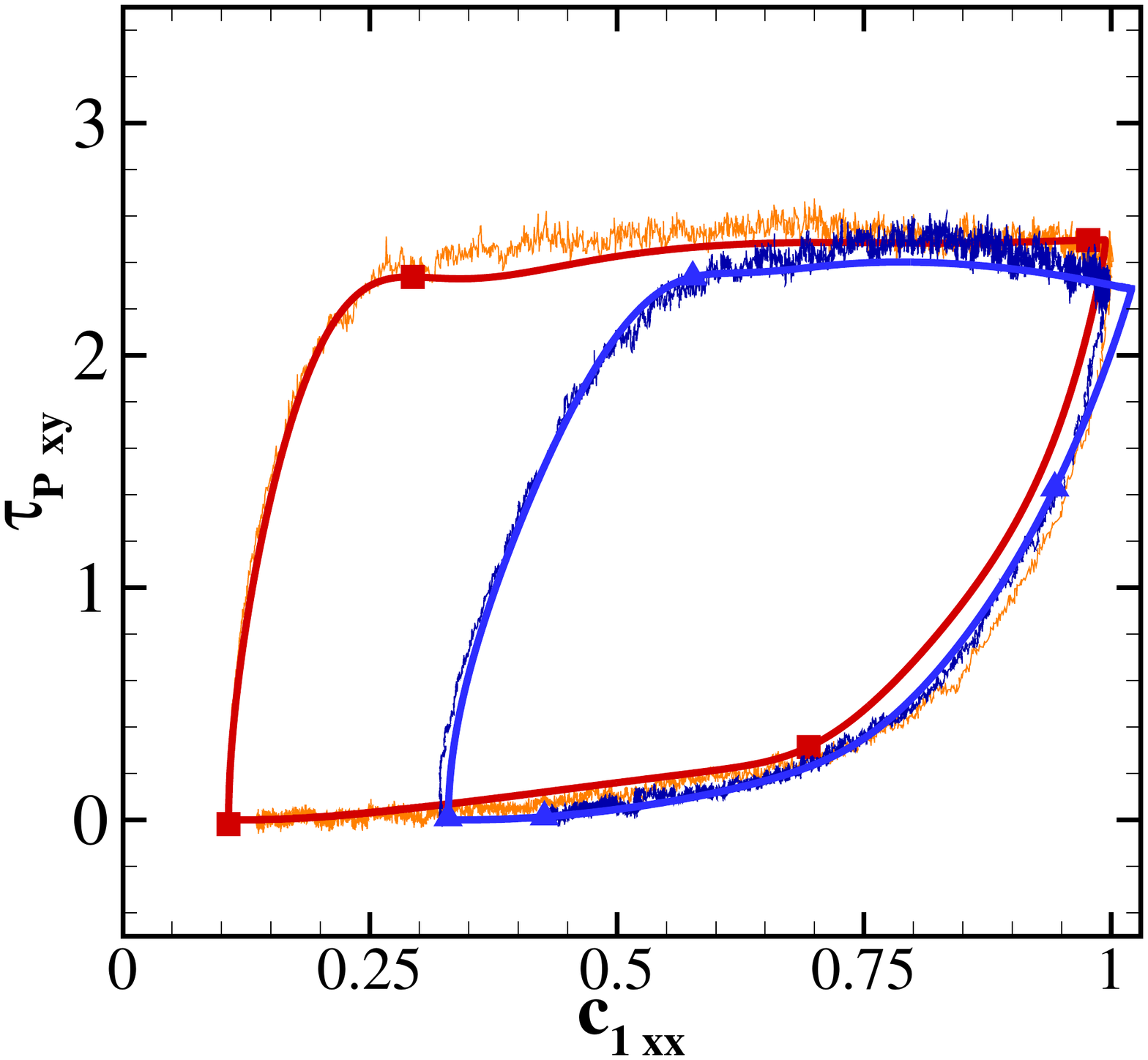}
\caption{The evolution of the polymer stress $\bm \tau _{\textrm p}$ and conformation tensor ${\mb c_1}$ obtained from the reverse Poiseuille flow ($y$ = 6) of the polymer suspensions. The clockwise loops represent the development and relaxation processes. For the visualization, the conformation tensor component ${{ c_1}_{xx}}$ is rescaled by the maximum value obtained from the micro-scale simulation. }
\label{fig:RPF_stress_c_loop}
\end{figure}

Next, we investigate the Womersley flow \cite{Womersley_flow_1955} by applying the opposite oscillating  body force 
$\mb f_\textrm{ext} = \left(\pm f_0\cos(2\pi\omega t), 0, 0\right)$ to each half of the domain along the z-direction, where we set
$f_0 = 0.012$ and $\omega = 1/3000$. Fig. \ref{fig:Womersley_flow_development} shows the velocity development of the star- and net-shaped molecule suspensions. Similar to
the reverse Poiseuille flow, the net-shaped molecule suspension shows less pronounced viscoelastic responses, reflected as the slower decay 
near $t \in [200, 400]$ and the larger oscillation due to the less elastic energy storage. For comparison, we also 
show the prediction from the conventional FENE-P model. The parameters are chosen to match the dynamics of the orientation tensor 
(the vector between two free-end particles) near equilibrium. As expected, the FENE-P model shows limitations for predicting the 
flow responses of the two suspensions. 

\begin{figure}[htbp]
\centering
\includegraphics[scale=0.25]{./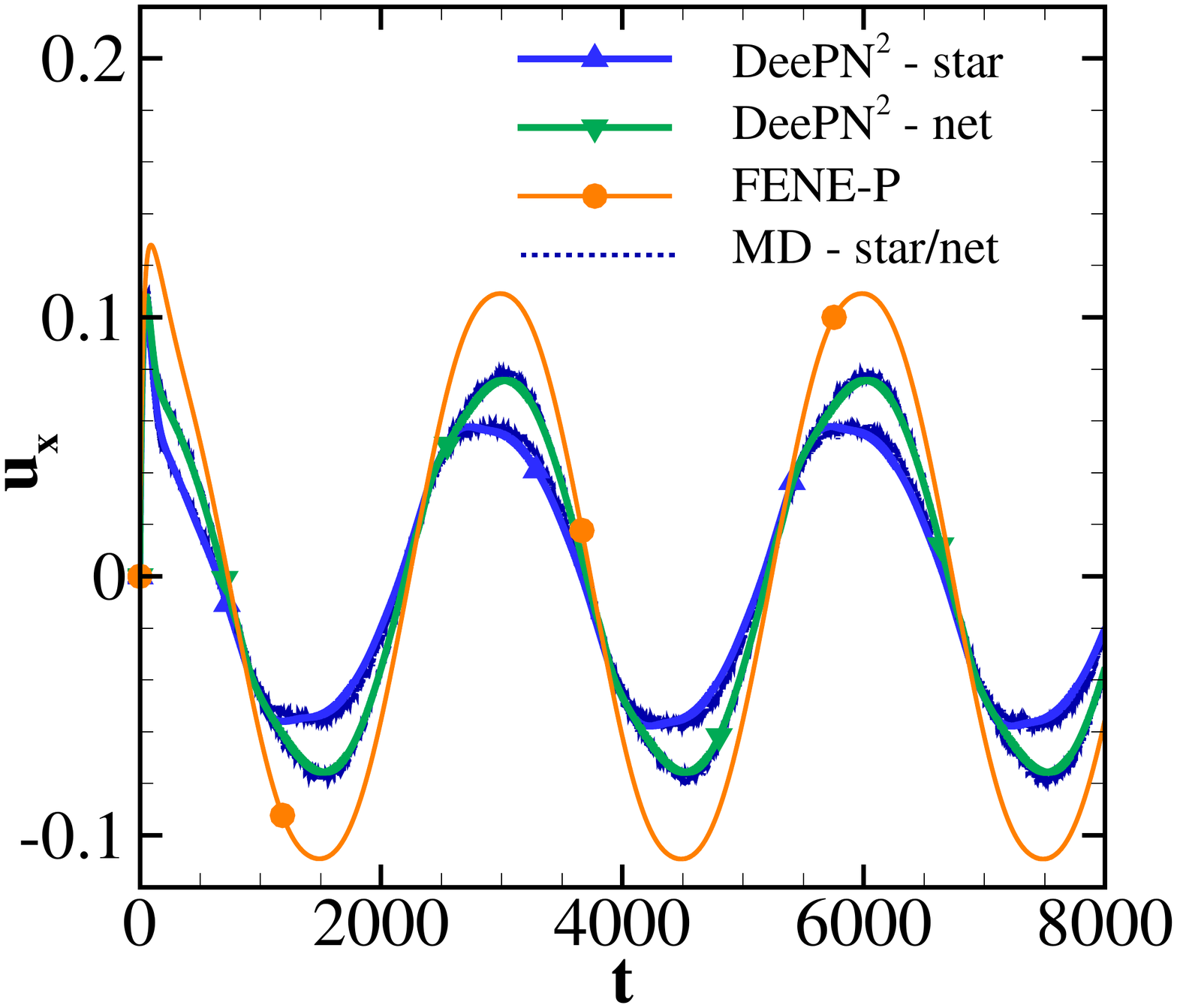}
\includegraphics[scale=0.25]{./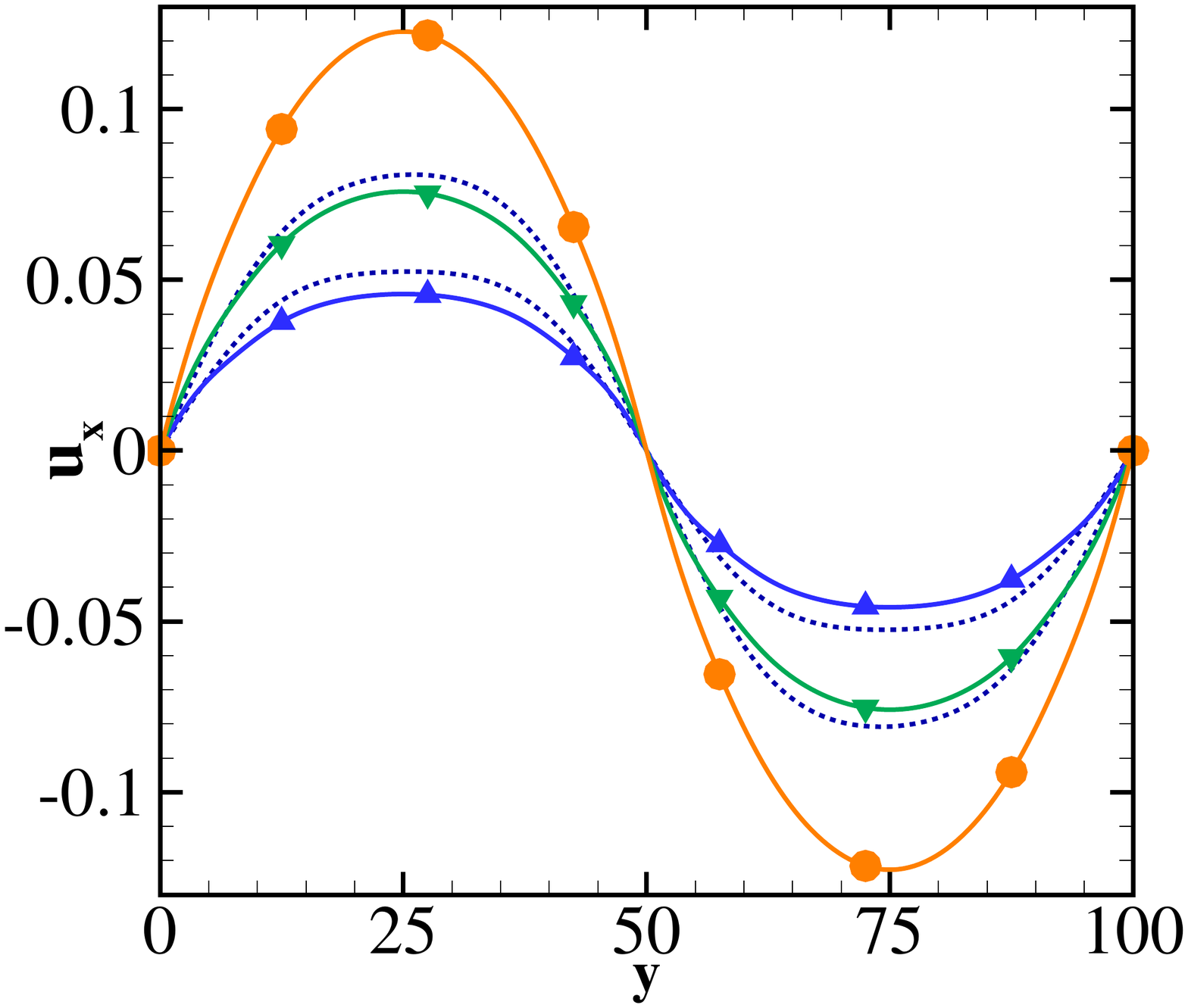}
\caption{The oscillating Womersley flow of the star- and net-shaped molecule suspensions predicted from the micro-scale simulation,  DeePN$^2$ and the FENE-P model. The FENE-P model parameters are chosen to match the dynamics of the  orientation tensor (the vector between two free-end particles) near equilibrium. Left: the velocity evolution $u_x(y, t)$  at $y = 6$. Right: the velocity profile $u_x(y, t)$ at $t= 6450$.}
\label{fig:Womersley_flow_development}
\end{figure}

The distinct viscoelastic responses of the different suspensions can be further elucidated by examining the elongation flow. We impose the traceless flow gradient $\nabla \mb u = {\textrm{diag}}(\dot{\epsilon}, -\dot{\epsilon}, 0)$ where the 
strain rate $\dot{\epsilon}$ is set to be $4\times 10^{-4}$. Fig. \ref{fig:Elongation_flow_development} shows the stress development of the 
chain- and star-shaped molecule suspensions. The micro-scale simulations are imposed by the generalized uniaxial extension flow boundary conditions \cite{Nicholson_UEF_JCP_2016, Murashima_Hagita_UEFEX_2018}.
Compared with the shear flow, the elongation flow yields larger extension and longer processes, as was  shown in experimental studies \cite{Douglas_Hazen_Science_1999}; the steady state is achieved at
about $t = 2.5\times 10^3$ and $t=10^4$ for the star- and chain-shaped molecule suspensions, respectively. Moreover, the steady stress value ${\tau _{\textrm{p}}}_{xx}$
of the chain-shaped molecule suspension is much larger than the value of the star-shaped molecule suspension. Such differences are also due to the larger 
flexibility of the chain-shaped molecule, which produces a stronger extension under external flow. DeePN$^2$ successfully
captures the different responses and shows good agreement with the micro-scale simulations for both cases.

\begin{figure}[htbp]
\centering
\includegraphics[scale=0.25]{./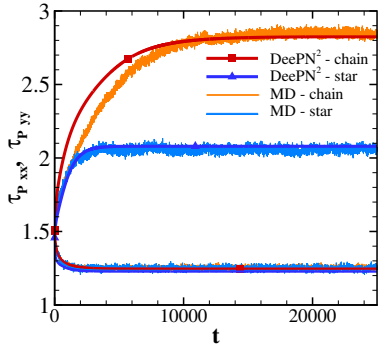}
\caption{The elongation flow of the chain- and star-shaped molecule suspensions predicted from the micro-scale simulation and DeePN$^2$. With the same bond potential and strain rate, the chain-shaped molecule suspension yields larger elongation stress. \revision{The lines with rough oscillations} denote the micro-scale simulation results; the solid lines with symbols denote the DeePN$^2$ predictions.}
\label{fig:Elongation_flow_development}
\end{figure}

Finally, we consider the Taylor-Green vortex flow \cite{Taylor_Green_1934, Thomases_Shelley_POF_2007} in a  $100\times 100\times 160$ domain (in reduced unit) of the micro-scale simulation. The external 
force $\mb f_{\textrm{ext}} = (f_x, f_y, 0)$ is applied to the domain following
\begin{equation*}
f_x(x,y) =  -2f_0 \sin\left(\frac{2\pi x}{L}\right)\cos\left(\frac{2\pi y}{L}\right),  
\quad
f_y(x,y) =  2f_0 \cos\left(\frac{2\pi x}{L}\right)\sin\left(\frac{2\pi y}{L}\right), 
\end{equation*}
where $L = 100$ and $f_0 = 6\times 10^{-3}$.  Periodic boundary conditions are imposed along all of the three directions. The force field imposes an elongation 
to the flow field along the x-direction and a compression along the y-direction.  
The flow near the center $(L/2, L/2)$ resembles the planar elongation flow. Four vortices appear at $(L/2\pm L/4, L/2\pm L/4)$. 
Figure. \ref{fig:four_mill_vel2D}(a-b) shows the steady-state velocity field. 
Compared with the star-shaped molecule suspension, the velocity field of the chain-shaped molecule suspension 
shows larger deviation from the symmetric structure of the Newtonian flow 
(i.e., $\propto$ $[-\sin\left({2\pi x}/{L}\right)\cos\left({2\pi y}/{L}\right),$  $\cos\left({2\pi x}/{L}\right)\sin\left({2\pi y}/{L}\right)]$)
due to the larger polymer stress across the flow regime.
Furthermore, the two suspensions yield different velocity magnitude, as shown in Fig. \ref{fig:four_mill_vel2D}(c). Fig. \ref{fig:four_mill_vel2D}(d)
shows the velocity development at $(75, 49)$. The velocities of both suspensions achieve a similar maximum value near $t = 30$ and decay along with
the polymer stress development. However, the star-shaped molecule suspension reaches the steady state much earlier with a larger velocity 
than the chain-shaped molecule suspension.

\begin{figure}[htbp]
\centering
\includegraphics[scale=0.25]{./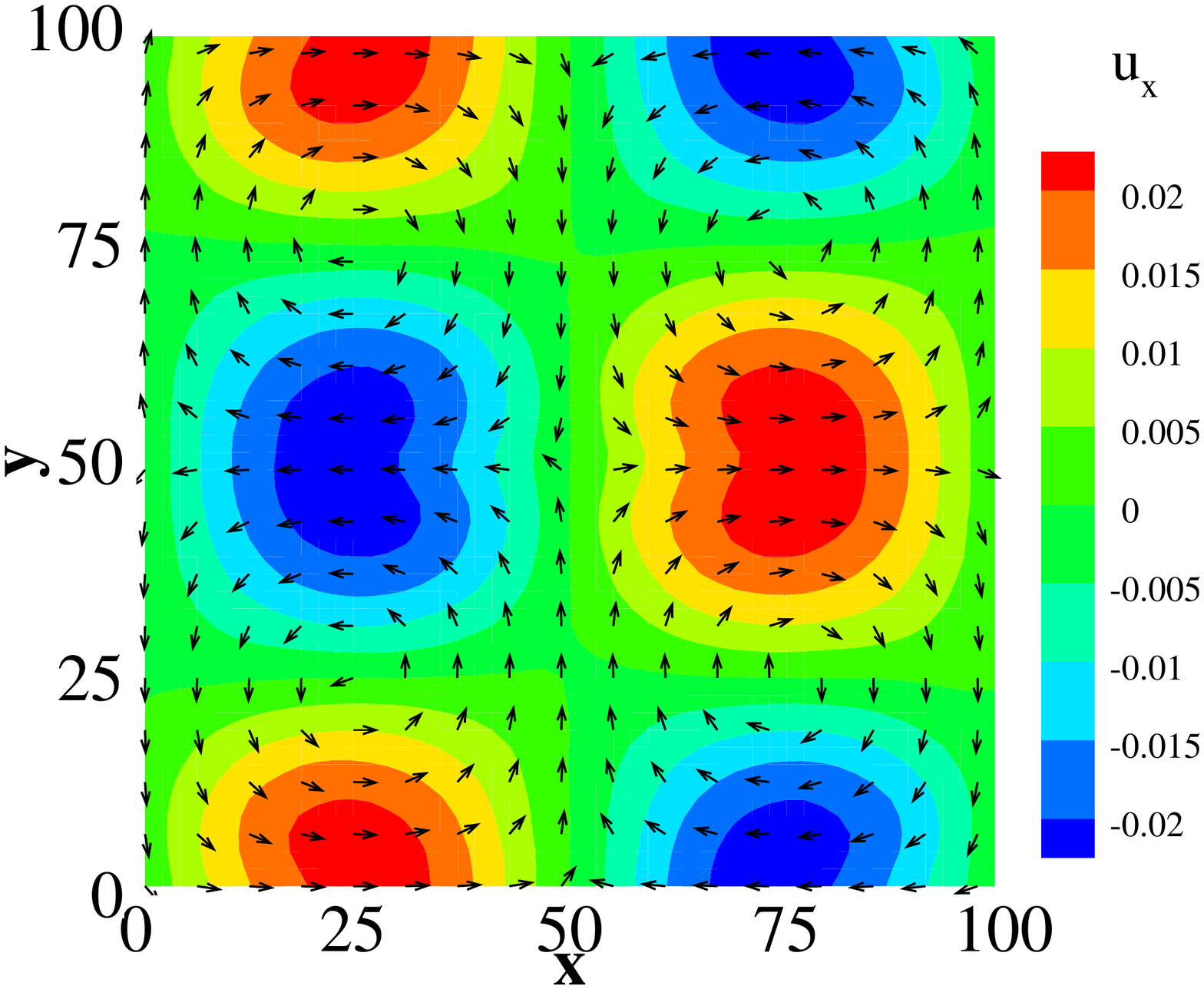}
\includegraphics[scale=0.25]{./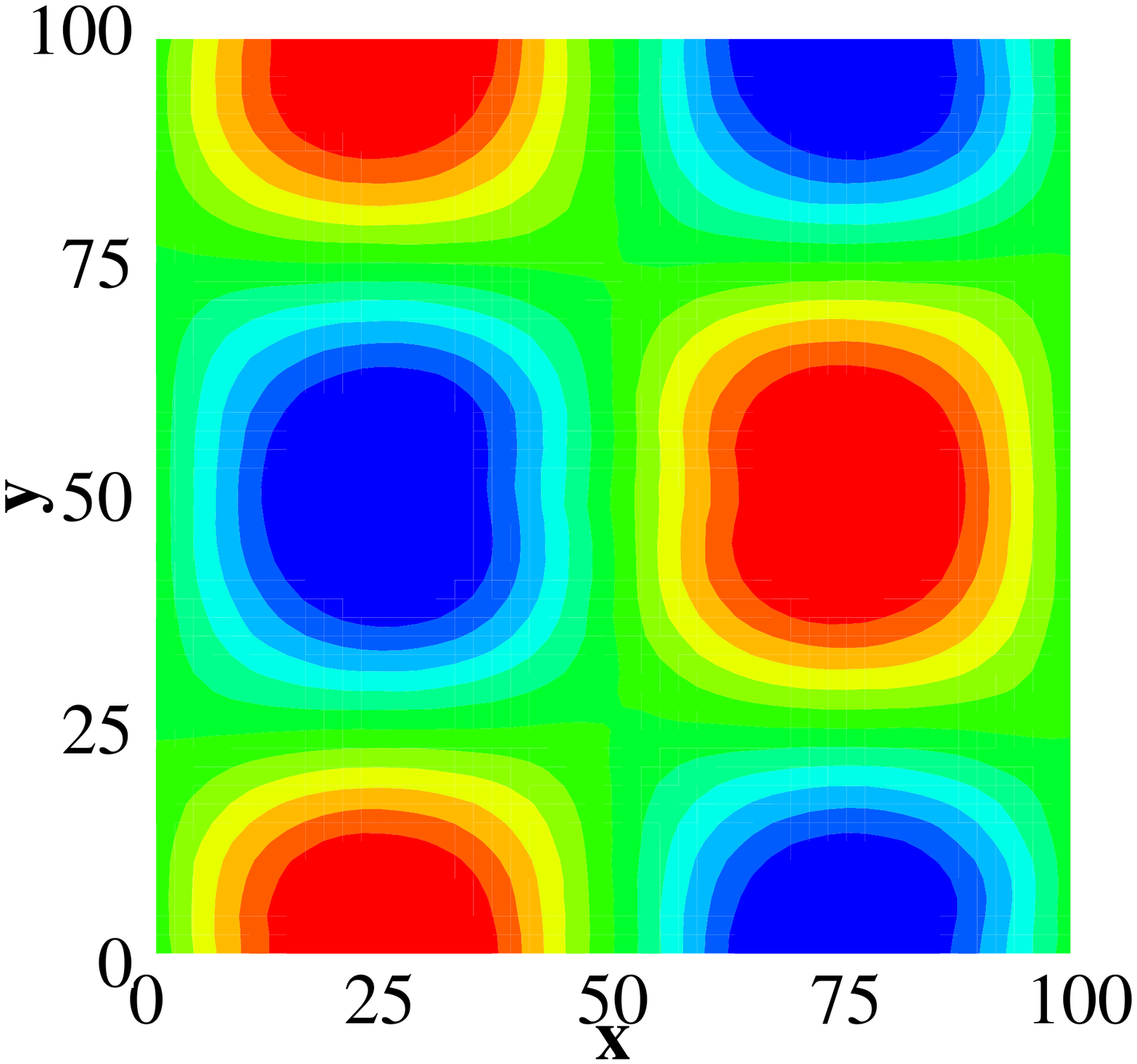}\\
\includegraphics[scale=0.25]{./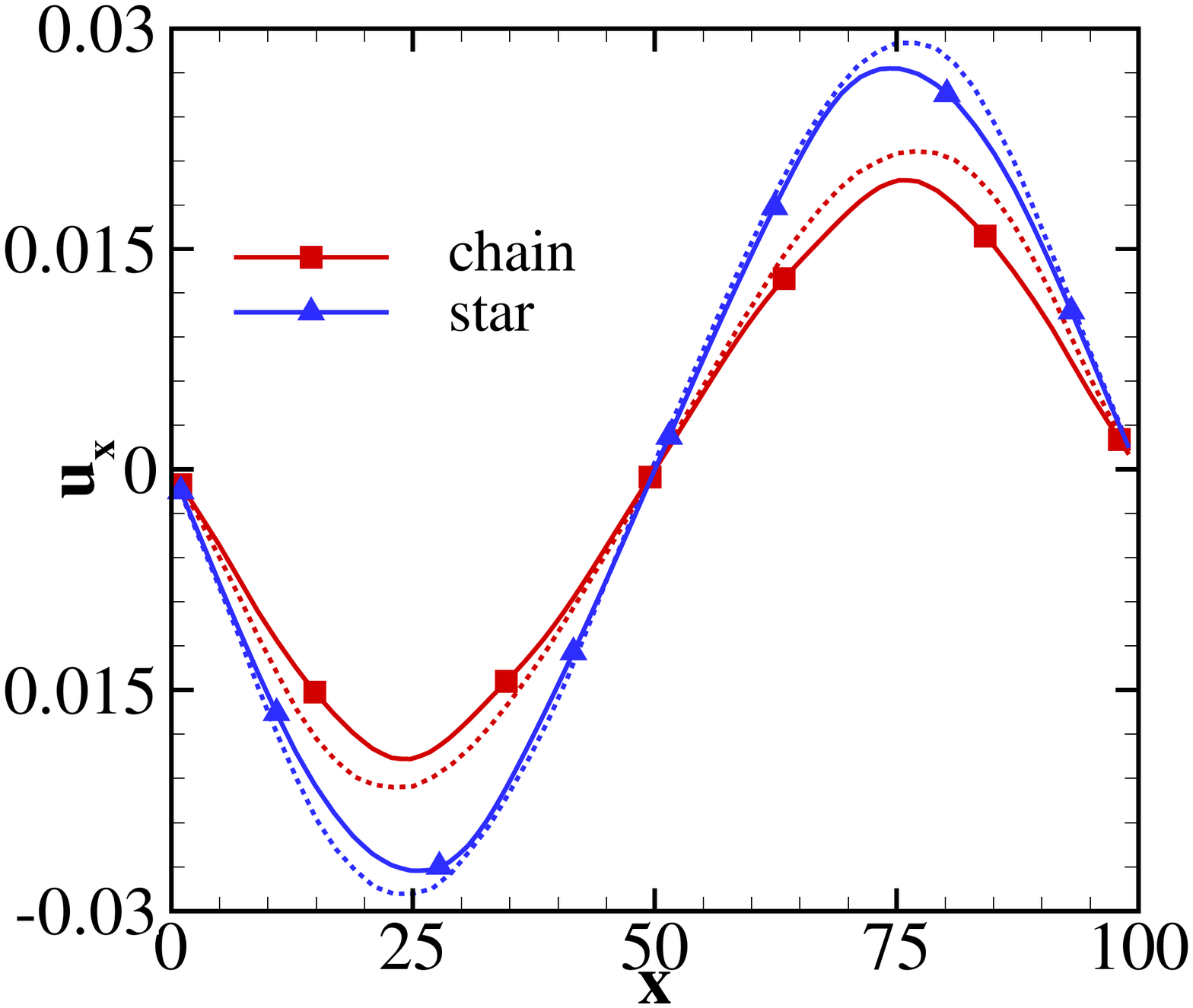}
\includegraphics[scale=0.25]{./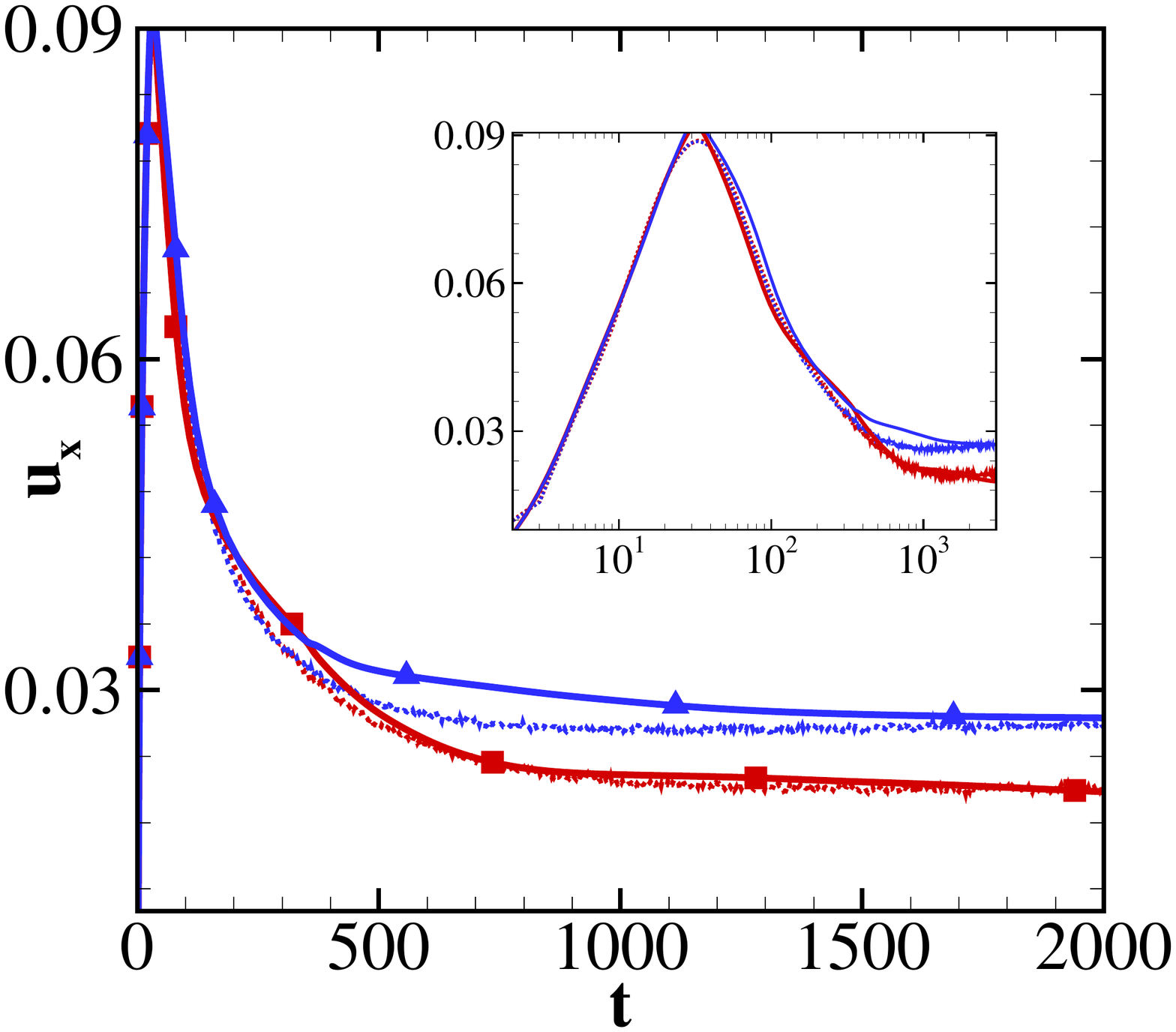}\\
\caption{The velocity field of the Taylor-Green vortex flow of the chain- and star-shaped molecule suspensions predicted from the micro-scale simulations and DeePN$^2$. 
(a-b) The 2D steady-state velocity field of the chain- and star-shaped molecule suspensions from the micro-scale simulations. 
The velocity field of the chain-shaped system yields more pronounced deviations from the symmetric Newtonian flow due to the more pronounced polymer stress 
across the flow regime. (c) The steady-state 1D velocity profile $u_x(x, y=49)$. The solid and dashed lines represent the predictions from the micro-scale simulations and the DeePN$^2$ model, respectively. (d) The time history of $u_x(x=75, y=49)$.}
\label{fig:four_mill_vel2D}
\end{figure}

Fig. \ref{fig:four_mill_stress2D} (a-b) shows the steady-state stress field for the two suspensions. 
We see that  the chain-shaped molecule suspension exhibits larger polymer stress variation
along the elongation and contraction directions, reflected in the larger loop area in Fig. \ref{fig:four_mill_stress2D}(b). Such difference is also consistent with the more
pronounced asymmetric velocity field shown in Fig. \ref{fig:four_mill_vel2D}(a-b). 
In addition, we also examine the transient states where the flow undergoes intricate and heterogeneous process. Fig. \ref{fig:four_mill_stress2D}(c) shows the stress 
development at point $(49, 35)$, where ${\tau _{\textrm{p}}}_{xx}$ and ${\tau _{\textrm{p}}}_{yy}$ cross over during the evolution. During the initial stage, ${\tau _{\textrm{p}}}_{yy}$ 
increases along with the flow development towards to the stagnation point. At $t > 150$, ${\tau _{\textrm{p}}}_{yy}$ decreases due to the compression along 
the y-direction. Meanwhile, ${\tau _{\textrm{p}}}_{xx}$ increases and achieves a steady state slightly larger than ${\tau _{\textrm{p}}}_{yy}$ for the star-shaped solution. On 
the other hand, the chain-shaped solution ends up with a significantly larger value of ${\tau _{\textrm{p}}}_{xx}$ due to the larger molecule flexibility and further
extension along the x-direction. The different viscoelastic responses are also reflected in the stress development at point $(49, 49)$.
As shown in Fig. \ref{fig:four_mill_stress2D}(d), the chain-shaped solution exhibits longer evolution of ${\tau _{\textrm{p}}}_{xx}$ and larger steady value than the star-shaped solution. 
DeePN$^2$ successfully captures such  micro-structure-induced rheological differences and shows good agreement with
the micro-scale simulation results.

\begin{figure}[htbp]
\centering
\includegraphics[scale=0.25]{./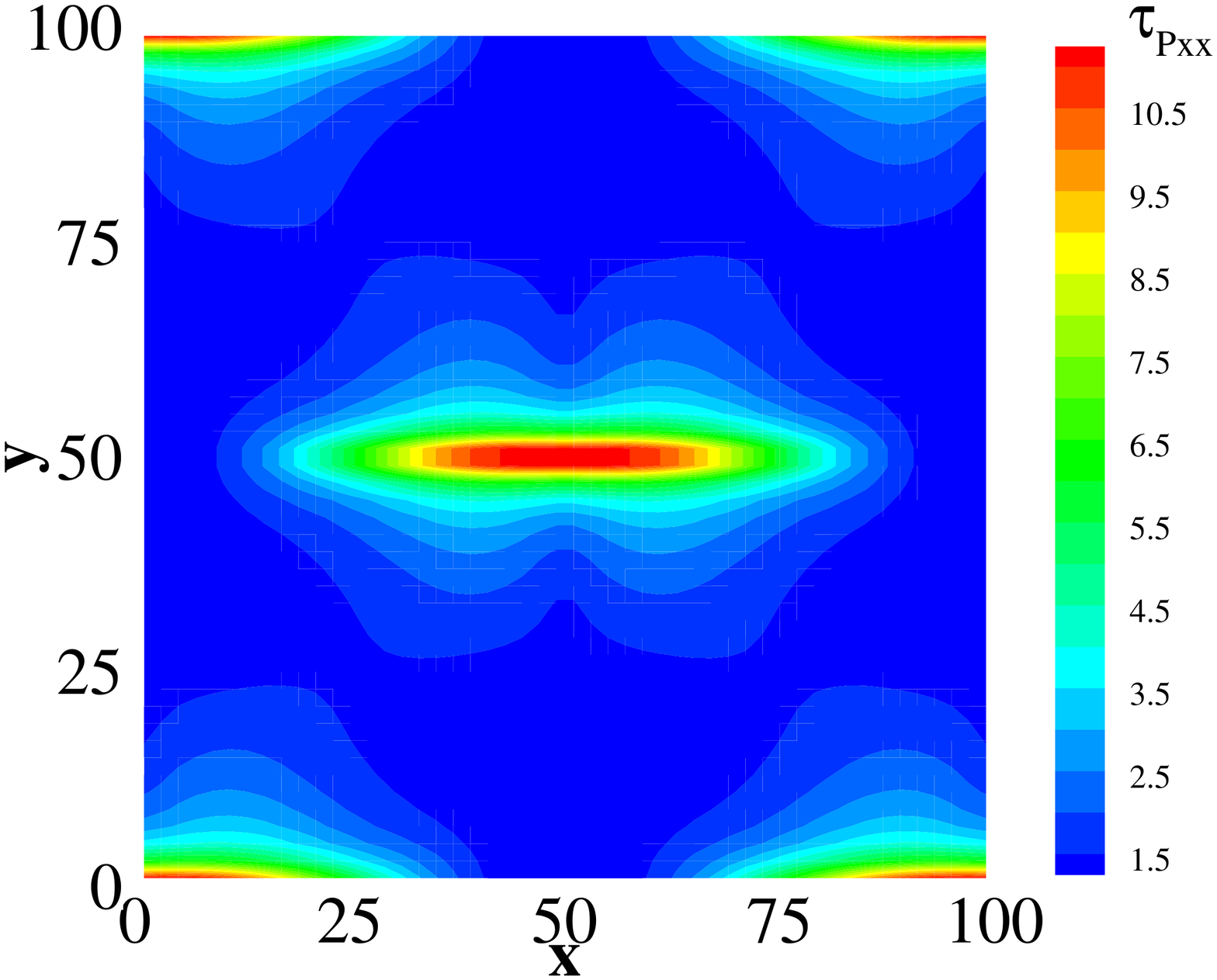}
\includegraphics[scale=0.25]{./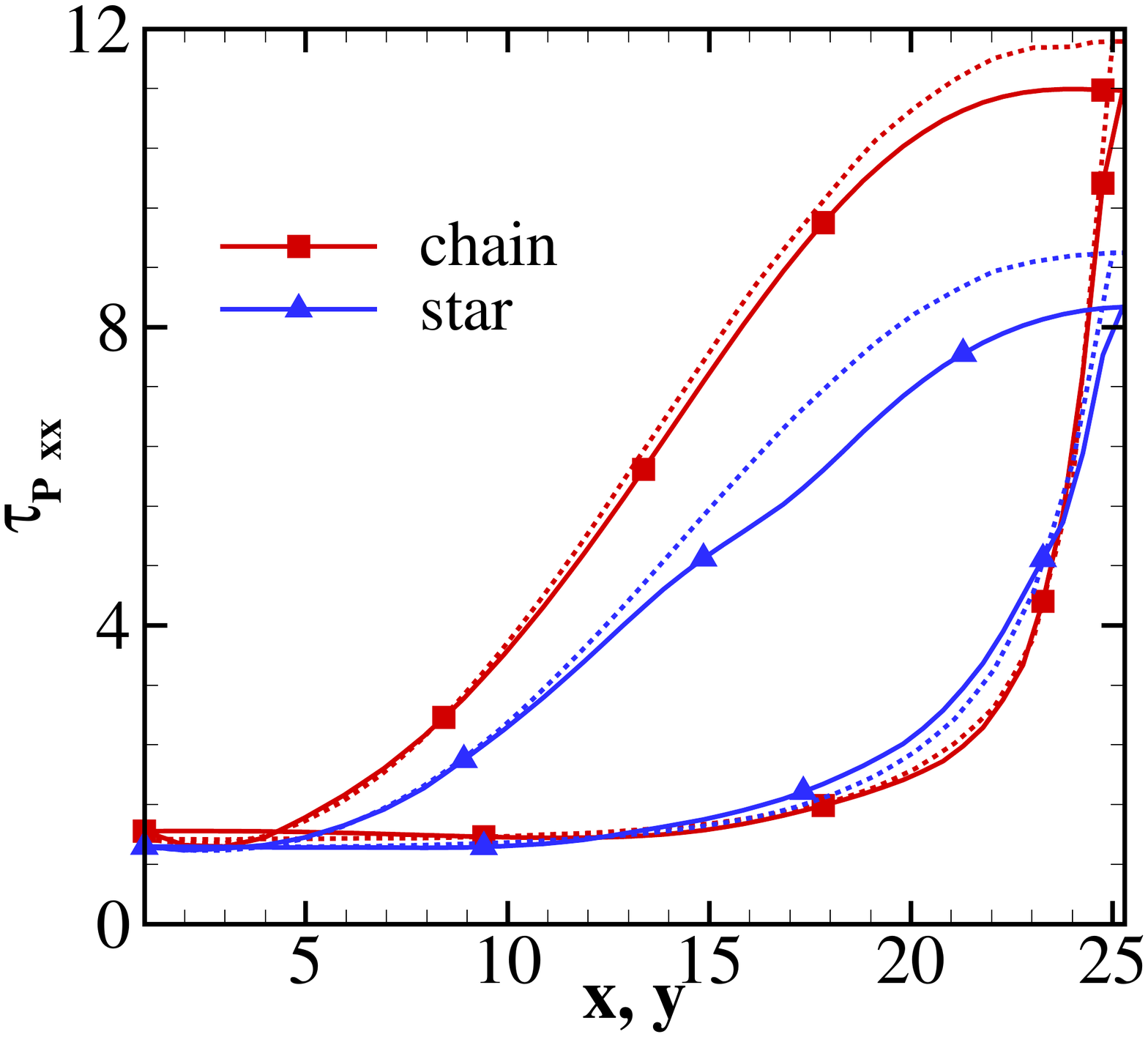}\\
\includegraphics[scale=0.25]{./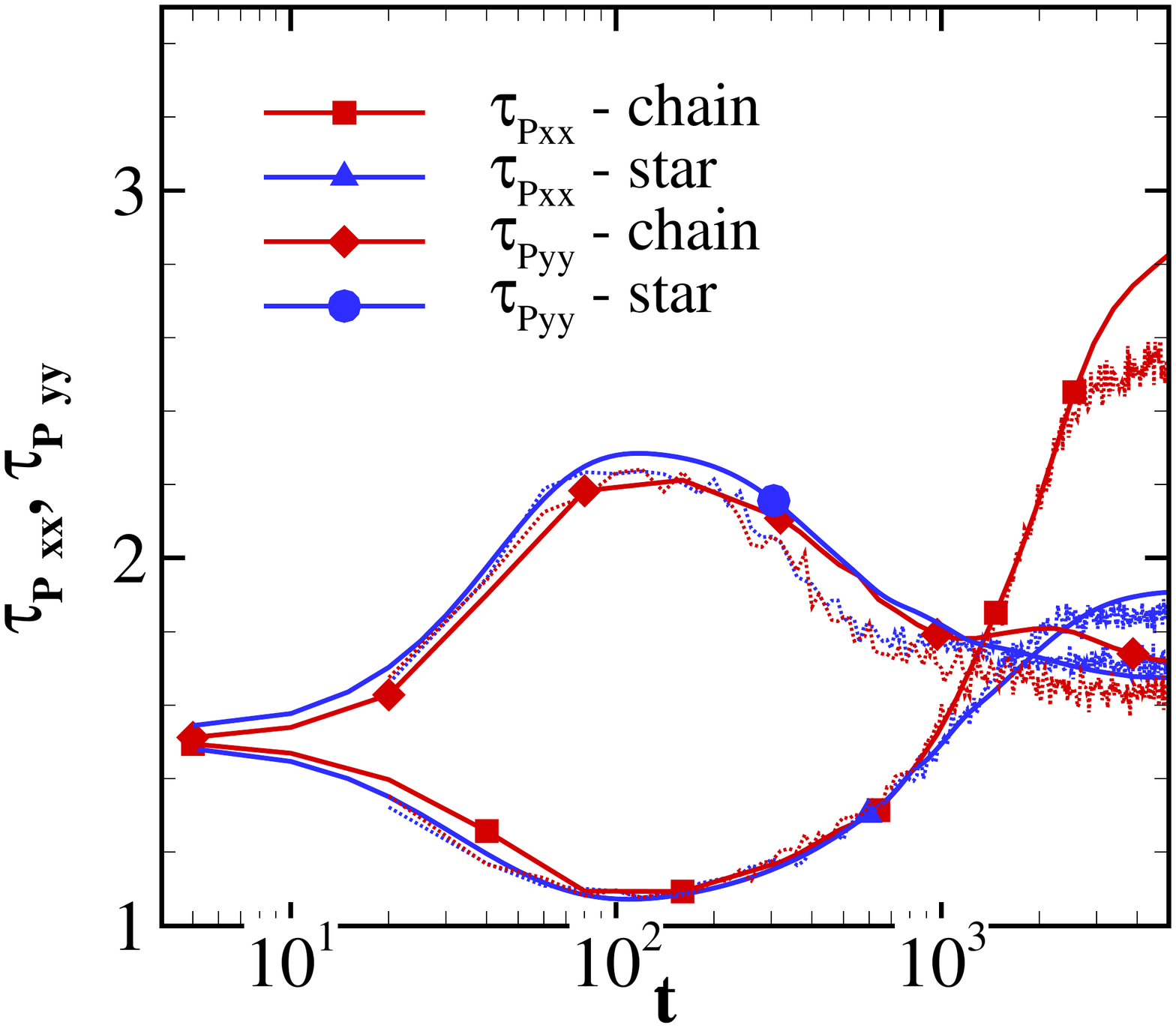}
\includegraphics[scale=0.25]{./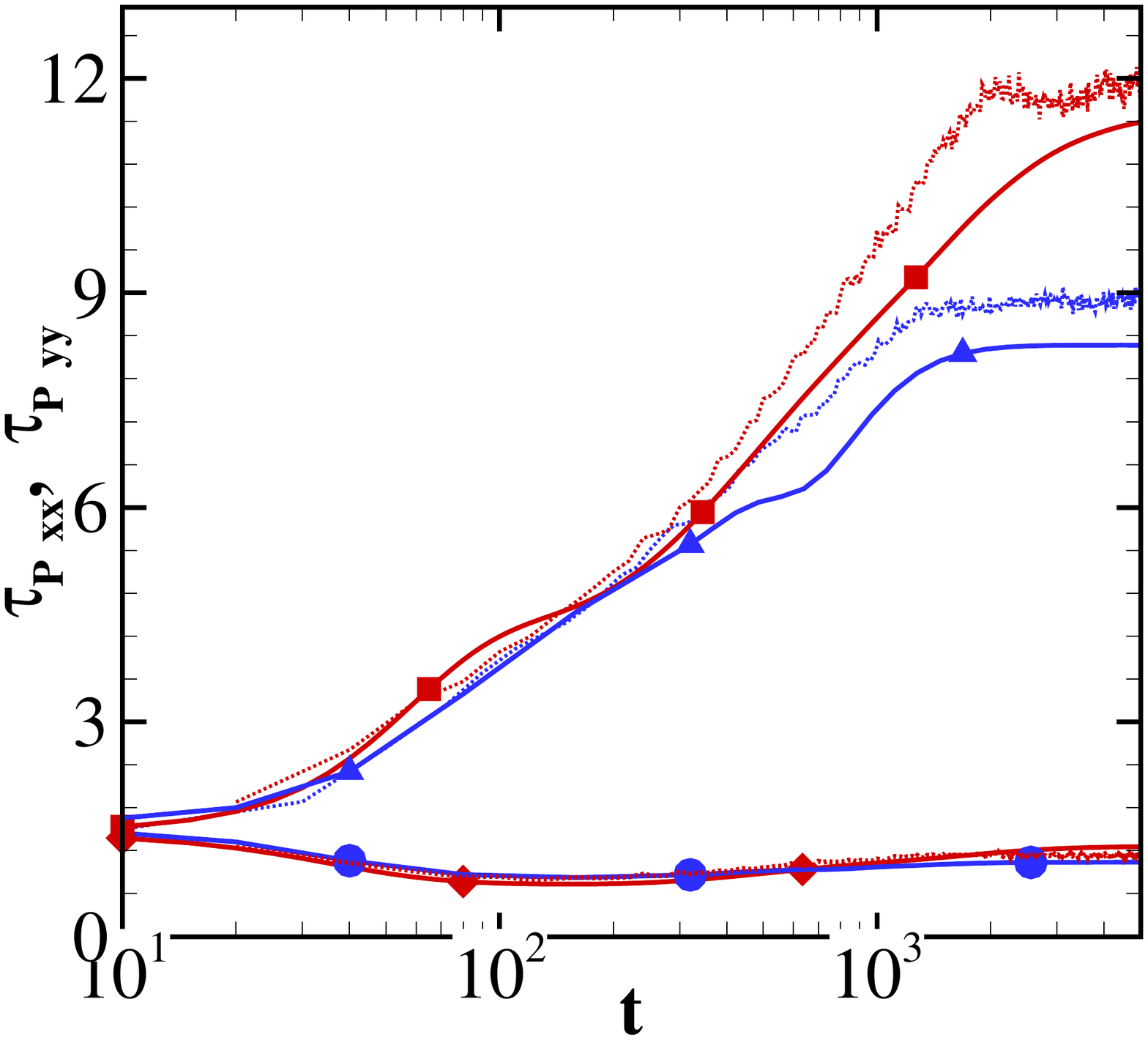}\\
\caption{The stress field of the Taylor-Green vortex flow of the chain- and star-shaped molecule suspensions predicted from the micro-scale simulations and DeePN$^2$. 
(a) The 2D steady-state stress field of the chain-shaped molecule suspension from the micro-scale simulations. (b) The 1D steady-state stress profiles ${\tau _{\textrm{p}}}_{xx}(x, y=49)$ and ${\tau _{\textrm{p}}}_{xx}(x=49, y)$. The chain-shaped molecule suspension yields larger stress variations (i.e., the ``loop area'') along the flow domain.  (c-d) The stress evolution of ${\tau _{\textrm{p}}}_{xx}(t)$ and ${\tau _{\textrm{p}}}_{yy}(t)$ at the points $(49, 35)$ and $(49, 49)$, respectively. The dashed and the solid lines denote the micro-scale simulations and the DeePN$^2$ predictions, respectively.}
\label{fig:four_mill_stress2D}
\end{figure}

\section{Discussion}
We have developed a general machine-learning based model,  DeePN$^2$, for describing the non-Newtonian hydrodynamics for polymer solutions
with arbitrary molecular structure and interaction. The constructed model retains a clear physical interpretation and faithfully encodes the micro-scale  structural information  
into the macro-scale hydrodynamics, where conventional models based on empirical closures generally show limitations. In particular, for the chain- and star-shaped molecule
suspensions with the same bead number and bond interaction,  DeePN$^2$ successfully captures the different viscoelastic responses arising from the different
molecular structural symmetry (i.e., the effective rigidity) in the configuration space without additional human intervention. 
Unlike the direct evaluation or moment-closure representations of the configurational PDF, the present DeePN$^2$ model directly learns a set of 
micro-to-macro mappings to probe the optimal approximations of the constitutive dynamics in terms of the macro-scale features, and thereby circumventing the numerical challenges due to the high-dimensionality of the polymer configuration space. This multi-scaled nature enables us to learn the constitutive dynamics of the macro-scale features directly 
from the kinetic equations of their micro-scale counterparts using only discrete 
rather than the time-derivative samples commonly used in the machine learning-based models of complex dynamic problems.

One thing we have not investigated systematically is the generation of training samples.
For DeePN$^2$ to be truly reliable, the training samples should be representative enough for all the practical situations that 
one might encounter.
However, due to the cost associated with generating such training samples, we would also like the training set to be as small as possible.
This calls for an adaptive procedure for generating the training sample, such as the concurrent learning procedure discussed 
in \cite{EHanZhang-PhysicsToday}.
The present DeePN$^2$ models are trained with samples collected from  homogeneous shear flow.
Even though the numerical predictions 
show good agreement with micro-scale simulations for a variety of flows, one should not expect this to be generally the case.
Further work on sampling is needed to make sure that one can produce truly reliable DeePN$^2$ models.
Furthermore, instead of the general form \eqref{eq:encoder_model}, a specific design of 
the encoders $\mb b(\cdot)$ accounting for the molecule symmetry and rigidity may facilitate the extraction of the macro-scale 
features $\mb c$. 
In addition, more accurate micro-scale kinetic models accounting for the heterogeneous hydrodynamic interactions \cite{Zimm_model_JCP_1956} and non-Markovianity \cite{Lei_Li_PNAS_2016, Lei_Li_JCP_2021}
can be used to construct the macro-scale constitutive dynamics.
Finally, the adaptive choice of the number of features and the enhanced sampling of the discrete micro-scale 
configurations may further improve the performance of the DeePN$^2$ model. We leave these issues for  future work.


\appendix

\section*{Appendices}

\section{Rotational frame-indifference of the constitutive dynamics for the multi-bead encoder function}
\label{sec:proof_rotation_dynamics}

We consider a polymer molecule consisting of $N$ particles. Let $\mb r = \left[\mb r_1; \mb r_2; \cdots; \mb r_{N-1}\right]$ 
denote the polymer configuration, so that there exists an invertible linear transformation between $\left[\mb r; \sum _{i=1}^N \mb q_i / N \right]$ and $ \left[\mb q_1; \mb q_2; \cdots; \mb q_{N}\right]$, where $\mb q_i$ is the position 
of the $i\mhyphen$th particle. In fact, there are multiple choices for $\mb r$, including the one we have applied in Eq. \eqref{eq:r_r_ast}, where $\mb r$ consists of $(N-1)$ edges of a spanning tree in the bead-bond structure.

We consider a second-order tensor taking the general form
\begin{equation}
\mb b = \mb f^{(1)}(\mb r) \mb f^{(2)}(\mb r)^T, \quad \mb f^{(1)}(\mb r) = \sum_{j=1}^{N-1} g^{(1)}_j(\mb r^{\ast})\mb r_j,
\quad \mb f^{(2)}(\mb r) = \sum_{j=1}^{N-1} g^{(2)}_j(\mb r^{\ast})\mb r_j,
\label{eq:general_B_tensor}
\end{equation}
where $\mb r^{\ast}$ is a  translational-rotational-invariant vector and $g^{(1)}$ and $g^{(2)}$ are two scalar functions. We note that the encoder in the form of Eq. \eqref{eq:general_B_tensor} is more general than Eq. \eqref{eq:encoder_model}.

In this appendix and the next, we consider two frames: frame 1 is static inertial, and frame 2 is rotating with respect to frame 1 with an time dependent orthogonal transformation $\mb Q (t)$.
Let $\tilde{\mb {x}}, \tilde{\mb {v}}, \tilde{\mb {b}}$ and $\mb x, \mb v, \mb b$ denote the positions, velocities, and second-order tensors in frame 1 and 2 respectively. They have the following relations:
\begin{equation}
  \tilde{\mb {x}} = \mb Q \mb x, 
  \quad 
  \tilde{\mb {v}} = \mb Q \mb v + \dot{\mb Q} \mb x, 
  \quad
  \tilde{\mb {b}} = \mb Q \mb b \mb Q ^T.
\end{equation}
The material derivatives in both frames are
\begin{equation}
  \left. \frac{\textup d}{\textup d t} \right| _{\textup{frame 1}} := \frac{\partial}{\partial t} + \tilde{\mb v} \cdot \nabla _{\tilde{\mb x}},
  \quad
  \left. \frac{\textup d}{\textup d t} \right| _{\textup{frame 2}} := \frac{\partial}{\partial t} + {\mb v} \cdot \nabla _{{\mb x}}.
\end{equation}

\begin{proposition}
With $\mb b$ defined by Eq. \eqref{eq:general_B_tensor},  we have
\begin{equation}
\begin{split}
\frac{\textup{d}}{\textup {d}t} \mb c - \bm\kappa:\left\langle \sum_{j=1}^{N-1}
\mb r_j\otimes\nabla_{\mb r_j}\otimes \mb b \right\rangle  &= 
\frac{k_BT}{\gamma}\left\langle \sum_{j,k=1}^{N-1}  A_{jk} \nabla_{\mb r_j} \cdot 
\nabla_{\mb r_k} \mb b \right\rangle\\
&- 
\frac{1}{\gamma} \left\langle \sum_{j=1}^{N-1} \sum_{k=1}^{N_b}  A_{jk} \nabla_{\mb r_k} V_{\textrm p}(\mb r) \cdot \nabla_{\mb r_j} 
\mb b \right\rangle,
\end{split}
\label{eq:FK_B_evoluation_triplet}
\end{equation}
obeys  rotational symmetry.
\label{prop:rotation_dynamics}
\end{proposition}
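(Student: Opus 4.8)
The plan is to verify that the relation~\eqref{eq:FK_B_evoluation_triplet} is covariant under the time-dependent rotation $\mb Q(t)$ relating frame~1 and frame~2: every term should transform covariantly, $\mb M\mapsto\mb Q\mb M\mb Q^T$, which is the same as saying the combined left-hand side is an objective rate. The first step is to collect the transformation rules of the ingredients. Since each $\mb r_j$ is a difference of bead positions, $\widetilde{\mb r}_j=\mb Q\mb r_j$; since $\mb r^{\ast}$ is built only from the lengths $|\mb r_j|$ and $|\mb r_{jk}|$, it is invariant, $\widetilde{\mb r}^{\ast}=\mb r^{\ast}$. Hence the scalar coefficients $g^{(1)}_j(\mb r^{\ast}),g^{(2)}_j(\mb r^{\ast})$ are unchanged, $\widetilde{\mb f}^{(i)}=\mb Q\mb f^{(i)}$, and $\widetilde{\mb b}=\mb Q\mb b\mb Q^T$. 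Because the rotation has unit Jacobian on configuration space and the configuration PDF is transported accordingly, $\widetilde{\mb c}=\langle\widetilde{\mb b}\rangle=\mb Q\mb c\mb Q^T$. Finally, the chain rule gives $\nabla_{\widetilde{\mb r}_j}=\mb Q\nabla_{\mb r_j}$, while the imposed velocity gradient transforms inhomogeneously, $\widetilde{\bm\kappa}=\mb Q\bm\kappa\mb Q^T+\dot{\mb Q}\mb Q^T$.

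I would then dispose of the right-hand side, which is the routine part. For the diffusion term, orthogonality gives $\nabla_{\widetilde{\mb r}_j}\cdot\nabla_{\widetilde{\mb r}_k}=(\mb Q\nabla_{\mb r_j})\cdot(\mb Q\nabla_{\mb r_k})=\nabla_{\mb r_j}\cdot\nabla_{\mb r_k}$, and since $\mb Q$ is constant on configuration space this operator sends $\widetilde{\mb b}=\mb Q\mb b\mb Q^T$ to $\mb Q(\nabla_{\mb r_j}\cdot\nabla_{\mb r_k}\mb b)\mb Q^T$, which averaging does not disturb. For the potential term, $V$ is rotation-invariant, so $\nabla_{\widetilde{\mb r}_k}\widetilde{V}=\mb Q\nabla_{\mb r_k}V$, and contracting this vector against the gradient slot of $\nabla_{\widetilde{\mb r}_j}\widetilde{\mb b}$ collapses the factors of $\mb Q$ on the contracted index by orthogonality, leaving $\mb Q(\nabla_{\mb r_k}V\cdot\nabla_{\mb r_j}\mb b)\mb Q^T$ (the structure coefficients $A_{jk}$ being scalars). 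Thus the entire right-hand side of~\eqref{eq:FK_B_evoluation_triplet} transforms covariantly, the prefactors $k_BT/\gamma$ and $1/\gamma$ being inert.

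The heart of the matter is the left-hand side. Write $\mathcal E:=\langle\sum_{j}\mb r_j\otimes\nabla_{\mb r_j}\otimes\mb b\rangle$, with the double contraction understood as in the Fokker-Planck derivation, $\bm\kappa:(\mb r_j\otimes\nabla_{\mb r_j}\otimes\mb b)=(\bm\kappa\mb r_j)\cdot\nabla_{\mb r_j}\mb b$. From the rules above, $\widetilde{\mathcal E}$ is the fourth-order rotation of $\mathcal E$, so $\widetilde{\bm\kappa}:\widetilde{\mathcal E}=\mb Q\big[(\bm\kappa+\bm\Omega):\mathcal E\big]\mb Q^T$ with $\bm\Omega:=\mb Q^T\dot{\mb Q}$ skew-symmetric; this exceeds $\mb Q(\bm\kappa:\mathcal E)\mb Q^T$ by the spurious term $\mb Q(\bm\Omega:\mathcal E)\mb Q^T$. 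Differentiating $\widetilde{\mb c}=\mb Q\mb c\mb Q^T$ along a material trajectory, and using that the material derivative of a scalar is frame-indifferent, gives $\tfrac{\textup{d}}{\textup{d}t}\widetilde{\mb c}=\mb Q\big(\tfrac{\textup{d}}{\textup{d}t}\mb c+\bm\Omega\mb c+\mb c\bm\Omega^T\big)\mb Q^T$. Hence covariance of the whole left-hand side is equivalent to the identity
\begin{equation*}
\bm\Omega:\left\langle\sum_{j=1}^{N-1}\mb r_j\otimes\nabla_{\mb r_j}\otimes\mb b\right\rangle=\bm\Omega\,\mb c+\mb c\,\bm\Omega^T\qquad\text{for every skew-symmetric }\bm\Omega .
\end{equation*}

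Establishing this identity is the step I expect to be the main obstacle, and it is exactly where the special structure~\eqref{eq:general_B_tensor} of the encoder enters. The operator $\sum_j(\bm\Omega\mb r_j)\cdot\nabla_{\mb r_j}$ is the infinitesimal generator of a rigid rotation of the polymer configuration: acting on any function of the invariants $\mb r^{\ast}$ it vanishes, since each contribution has the form $\mb v^{T}\bm\Omega\mb v/|\mb v|=0$ with $\mb v$ equal to some $\mb r_k$ or $\mb r_k-\mb r_l$, whereas acting on $\mb r_k$ it returns $\bm\Omega\mb r_k$. Consequently it sends $\mb f^{(i)}=\sum_j g^{(i)}_j(\mb r^{\ast})\mb r_j$ to $\bm\Omega\mb f^{(i)}$, and by the Leibniz rule it sends $\mb b=\mb f^{(1)}(\mb f^{(2)})^{T}$ to $\bm\Omega\mb b+\mb b\bm\Omega^T$; taking averages yields the displayed identity. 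The spurious $\bm\Omega$-terms then cancel, so $\tfrac{\textup{d}}{\textup{d}t}\mb c-\bm\kappa:\mathcal E$ transforms covariantly, and together with the preceding paragraph this proves that~\eqref{eq:FK_B_evoluation_triplet} is frame-indifferent. The one point that requires care is to make sure the contraction convention used to read off $\bm\kappa:\mathcal E$ coincides with the one used to read off $\bm\Omega:\mathcal E$; with that consistency the cancellation is automatic.
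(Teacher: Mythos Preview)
Your proof is correct and follows essentially the same approach as the paper: both arguments hinge on the fact that the spurious $\dot{\mb Q}$-contributions from $\tfrac{\textup d}{\textup d t}\mb c$ and from $\bm\kappa:\mathcal E$ cancel precisely because the skew operator $\sum_j(\bm\Omega\mb r_j)\cdot\nabla_{\mb r_j}$ annihilates the rotation-invariant scalars $g^{(i)}_j(\mb r^{\ast})$ (the paper states this as $\sum_i\mb r_i^{T}(\dot{\mb Q}^{T}\mb Q)\nabla_{\mb r_i}g\equiv 0$). Your packaging---first isolating the covariance question as the single identity $\bm\Omega:\mathcal E=\bm\Omega\mb c+\mb c\bm\Omega^{T}$ and then reading that identity off from the infinitesimal-rotation interpretation and the Leibniz rule on $\mb b=\mb f^{(1)}(\mb f^{(2)})^{T}$---is a bit more conceptual than the paper's direct component computation on the pieces $\mb b_{jk}=g(\mb r^{\ast})\mb r_j\mb r_k^{T}$, but the mathematical content is the same.
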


\begin{proof}
Let us choose the 
vector $\mb r^{\ast} = \left[\vert \mb r_1\vert, \vert \mb r_2\vert, \vert \mb r_{12}\vert, 
\vert \mb r_3\vert, \vert \mb r_{13}\vert, \vert \mb r_{23}\vert, \cdots, \vert \mb r_{N-2,N-1}\vert\right]$.
Denote by $r^{\ast}_i$ the $i\mhyphen$th element of $\mb r^{\ast}$ and  $\mb r^{\ast}_i$ the corresponding
the 3-dimensional vector, i.e., $r^{\ast}_6 = \vert \mb r_{23}\vert$ and $\mb r^{\ast}_6 = \mb r_{23}$. 
Following Eq. \eqref{eq:general_B_tensor}, $\mb b$ consists of
\begin{equation}
\mb b = \sum_{j,k=1}^{N-1} \mb b_{jk}, \quad \mb b_{jk} = g(\mb r^{\ast}) \mb r_j \mb r_k^T,
\end{equation}
where $g(\mb r^{\ast})$ denotes $g^{(1)}_j(\mb r^{\ast}) g^{(2)}_k(\mb r^{\ast})$ for simplicity.
With this general form, we have
\begin{equation}
\frac{\textrm d}{\textrm dt} \left\langle \tilde{\mb b}_{jk}\right\rangle \big\vert_{\textrm{frame}~1} = 
\dot{\mb Q} \left\langle \mb b_{jk} \right\rangle \mb Q^T
+ \mb Q \left\langle \mb b_{jk} \right\rangle \dot{\mb Q}^T
+ \mb Q \frac{\textrm d}{\textrm dt} \left\langle \mb b_{jk} \right\rangle \big\vert_{\textrm{frame}~2}
\mb Q^T.
\label{eq:FK_B_1_angle}
\end{equation}

Moreover, we note that 
\begin{equation}
\begin{split}
&\tilde{\bm\kappa}:\left(\sum_{i=1}^{N-1}
\tilde{\mb r_i}\otimes\nabla_{\tilde{\mb r_i}}\otimes 
\tilde{\mb b}_{jk}\right)  \\
& = \sum_{i=1}^{N-1} \left[ \left(\mb Q \bm\kappa \mb Q^T + \dot{\mb Q}\mb{Q}^T\right)
\cdot \mb Q \mb r_j \right] \cdot \mb Q \otimes \nabla_{\mb r_i} \otimes \left(\mb Q \mb b_{jk}\mb Q^T\right) \\
&= \sum_{i=1}^{N-1} \left(\bm\kappa\cdot\mb r_i\right)\cdot  \nabla_{\mb r_i} \left(\mb Q \mb b_{jk}\mb Q^T\right) +
(\mb Q^T \dot{\mb Q} \mb r_i)\cdot \nabla_{\mb r_i} \left(\mb Q \mb b_{jk}\mb Q^T\right) \\
&= \sum_{i=1}^{N-1} \mb Q (\bm\kappa\cdot\mb r_i) \cdot \nabla_{\mb r_i} \mb b_{jk} \mb Q^T + \mb Q\left(\mb Q^T\dot{\mb Q} \mb b_{jk} 
+ \mb b_{jk} \dot{\mb Q}^T \mb {Q}\right)\mb Q^T  \\
&~~~~ + \mb Q \left(\sum_{i=1}^{N-1} {\mb r_{i}}^T ( \dot{\mb Q}^T{\mb Q}) 
\nabla _{\mb r_i} g (\mb r^{\ast}) \right)\mb r_{j}r_k^T \mb Q^T \\
&= \sum_{i=1}^{N-1} \mb Q (\bm\kappa\cdot\mb r_i) \cdot \nabla_{\mb r_i} \mb b_{jk}\mb Q^T + 
\dot{\mb Q} \mb b_{jk} \mb Q^T
+ \mb Q \mb b_{jk} \dot{\mb Q}^T,
\end{split}
\label{eq:FK_B_2_angle}
\end{equation}
where we have used 
${\mb r_{i}}^T (\dot{\mb Q}^T{\mb Q}) \mb r_{i} \equiv 0$
since $\dot{\mb Q}^T{\mb Q}$ is anti-symmetric. Eq. \eqref{eq:FK_B_1_angle} and Eq. \eqref{eq:FK_B_2_angle} shows that 
the combination of the two terms on the left-hand-side of Eq. \eqref{eq:FK_B_evoluation_triplet} rigorously preserve 
the rotational symmetry, i.e.,
\begin{equation}
\quad \left. \left(\frac{\textrm d}{\textrm dt} \left\langle \tilde{\mb b}\right\rangle 
- \tilde{\bm\kappa}:\sum_{i=1}^{N-1}\left\langle \tilde{\mb r_i} \otimes \nabla_{\tilde{\mb r_i}}
\otimes
\tilde{\mb b}\right\rangle\right) \right\vert_{\textrm{frame}~1} 
\equiv 
\mb Q \left . \left(
\frac{\textrm d}{\textrm dt} \left\langle \mb b \right\rangle 
- \bm\kappa:\sum_{i=1}^{N-1}\left\langle \mb r_i \otimes \nabla_{\mb r_i} \otimes
\mb b \right\rangle \right) \right\vert_{\textrm{frame}~ 2}\mb Q^T. \nonumber
\end{equation}
It is straightforward to prove rotational symmetry 
for the other terms in Eq \eqref{eq:FK_B_evoluation_triplet}.
\end{proof}

\section{Symmetry-preserving neural network representation of the objective tensor derivatives}

\begin{proposition}
The following ansatz of $\left\langle\sum_{i=1}^{N-1}\mb r_i \otimes \nabla_{\mb r_i} \otimes 
\mb b\right\rangle$ 
ensures that the dynamic of evolution of $\mb c$ retains rotational invariance.
\begin{equation}
\begin{split}
\sum_{i=1}^{N-1}\left\langle\mb r_i \otimes \nabla_{\mb r_i}\otimes \mb b\right\rangle 
&= \sum_{j,k=1}^{N-1} \left\langle g^{(1)}_j(\mb r^{\ast}) g^{(2)}_k(\mb r^{\ast})(\mb r_j\otimes\nabla_{\mb r_j} + \mb r_k\otimes\nabla_{\mb r_k}) \otimes \mb r_j \mb r_k^T \right\rangle \\ 
&+ \sum_{k=1}^9 \mb E_{1}^{(k)} ({\mb c}) \otimes \mb E_{2}^{(k)} ({\mb c}),
\end{split}
\end{equation}
where 
$\mb c = ( \mb c_1, \cdots, \mb c_n )$, $\tilde{\mb c} = ( \tilde{\mb c}_1, \cdots, \tilde{\mb c}_n )$, and
$\mb E_1$ and $\mb E_2$ satisfy 
\begin{equation}
\tilde{\mb E}_1 := \mb E_1(\tilde{\mb c}) = \mb Q \mb E_1({\mb c}) \mb Q^T, \quad
\tilde{\mb E}_2 := \mb E_2(\tilde{\mb c}) = \mb Q \mb E_2({\mb c}) \mb Q^T.
\label{eq:G_F_symmetry}
\end{equation}
\label{prop:rotation_encoder}
\end{proposition}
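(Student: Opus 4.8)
The plan is to split the fourth-order tensor $\mathcal{E}:=\left\langle\sum_{i=1}^{N-1}\mb r_i\otimes\nabla_{\mb r_i}\otimes\mb b\right\rangle$ into the two pieces on the right-hand side of the asserted identity, to verify frame-indifference of the evolution term $\tfrac{\diff}{\diff t}\mb c-\bm\kappa:\mathcal{E}$ for each piece separately, and then to argue that the remainder piece genuinely admits the stated representation in terms of $\mathrm{SO}(3)$-equivariant second-order tensors.

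First I would apply the product rule to $\mb b=\sum_{j,k}g^{(1)}_j(\mb r^{\ast})\,g^{(2)}_k(\mb r^{\ast})\,\mb r_j\mb r_k^T$, using the form \eqref{eq:general_B_tensor}. Each $\nabla_{\mb r_i}$ either acts on the dyad $\mb r_j\mb r_k^T$, which is nonzero only for $i\in\{j,k\}$ and collects into the first term $\mathcal{E}^{(0)}$ of the ansatz, or it acts on one of the invariant scalars $g^{(1)}_j,g^{(2)}_k$, yielding a remainder $\mathcal{R}$; thus $\mathcal{E}=\mathcal{E}^{(0)}+\mathcal{R}$. A short computation gives $\bm\kappa:\mathcal{E}^{(0)}=\bm\kappa\mb c+\mb c\bm\kappa^T$ as in \eqref{eq:permutation_E}, so $\tfrac{\diff}{\diff t}\mb c-\bm\kappa:\mathcal{E}^{(0)}$ is a convected tensor derivative, which is objective; equivalently, $\mathcal{E}^{(0)}$ is precisely the part of $\mathcal{E}$ whose inhomogeneous terms $\dot{\mb Q}\mb c\mb Q^T+\mb Q\mb c\dot{\mb Q}^T$ in \eqref{eq:FK_B_1_angle}--\eqref{eq:FK_B_2_angle} cancel those of $\tfrac{\diff}{\diff t}\mb c$ under the change of frame. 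It then remains only to handle $\mathcal{R}$.

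The crucial observation for $\mathcal{R}$ is an index symmetry in its first slot pair. Since $\mb r^{\ast}$ is built solely from lengths $|\mb r_m|$ and $|\mb r_{mn}|$, one has $\sum_i(\mb r_i)_a\,\partial_{(\mb r_i)_b}|\mb r_{mn}|=(\mb r_{mn})_a(\mb r_{mn})_b/|\mb r_{mn}|$, and likewise with $\mb r_{mn}$ replaced by $\mb r_m$; this is symmetric in $a\leftrightarrow b$, so $\mathcal{R}_{abcd}=\mathcal{R}_{bacd}$ and only the rate-of-strain $\mb D:=\tfrac{1}{2}(\bm\kappa+\bm\kappa^T)$ couples to $\mathcal{R}$, i.e. $\bm\kappa:\mathcal{R}=\mb D:\mathcal{R}$. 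Now $\mb D$ is objective, $\widetilde{\mb D}=\mb Q\mb D\mb Q^T$ (the additive spin contribution to $\bm\kappa$ under the frame change is antisymmetric and drops out of the symmetrization), and $\bm\kappa:\mathcal{R}=\bm\kappa:\mathcal{E}-\bm\kappa:\mathcal{E}^{(0)}$ is itself objective, because by the proof of Proposition \ref{prop:rotation_dynamics} the term $\bm\kappa:\mathcal{E}$ and the explicit term $\bm\kappa:\mathcal{E}^{(0)}=\bm\kappa\mb c+\mb c\bm\kappa^T$ carry the \emph{same} inhomogeneous correction under the frame change, so their difference transforms as $\mb Q(\cdot)\mb Q^T$. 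Writing the $(a,b)$-symmetric object $\mathcal{R}$ as a finite sum $\sum_k\mb E_1^{(k)}\otimes\mb E_2^{(k)}$ with $\mb E_1^{(k)}$ symmetric — at most $9$ terms suffice by a rank bound on the relevant finite-dimensional space of such fourth-order tensors — and choosing the factors equivariant via the eigen-frame construction \eqref{eq:permutation_G} so that \eqref{eq:G_F_symmetry} holds, we obtain $\bm\kappa:\mathcal{R}=\sum_k\big(\mb D:\mb E_1^{(k)}\big)\mb E_2^{(k)}$; each scalar $\mb D:\mb E_1^{(k)}=\Tr\big(\mb D\,\mb E_1^{(k)}\big)$ is a frame-indifferent invariant (using symmetry of $\mb E_1^{(k)}$ together with objectivity of $\mb D$), and each $\mb E_2^{(k)}$ transforms as $\mb Q\mb E_2^{(k)}\mb Q^T$, so $\bm\kappa:\mathcal{R}$ is objective. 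Adding this to the $\mathcal{E}^{(0)}$ contribution yields frame-indifference of $\tfrac{\diff}{\diff t}\mb c-\bm\kappa:\mathcal{E}$, which is the assertion.

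I expect the main obstacle to be the last step: establishing that $\mathcal{R}$ admits a representation by finitely many (at most $9$) dyadic products of \emph{equivariant symmetric} second-order tensors, rather than merely being frame-indifferent as a whole. The useful inputs are the $a\leftrightarrow b$ index symmetry above — which allows replacing $\bm\kappa$ by $\mb D$ and taking the left factors symmetric — together with an equivariant refinement of the usual tensor-rank decomposition so that the factors inherit the $\mathrm{SO}(3)$-covariance; this is exactly where the eigen-frame parametrization \eqref{eq:permutation_G}--\eqref{eq:G_F_symmetry} does the work. The remaining bookkeeping (the product rule, the explicit form of $\bm\kappa:\mathcal{E}^{(0)}$, and the frame-change computations, which parallel \eqref{eq:FK_B_1_angle}--\eqref{eq:FK_B_2_angle}) is routine.
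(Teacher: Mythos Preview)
Your decomposition $\mathcal{E}=\mathcal{E}^{(0)}+\mathcal{R}$ via the product rule is exactly the paper's split: the paper writes the fourth-order tensor in two bases, identifies the $\mb F$-basis piece (forced to have $\mb F_2=\mb I$) with your $\mathcal{E}^{(0)}$ carrying the upper-convected structure, and pushes the rest into $\sum_k\mb E_1^{(k)}\otimes\mb E_2^{(k)}$. The difference lies in how the $\mathcal{R}$ piece is shown to contribute objectively. The paper argues directly that $\tilde{\bm\kappa}:\tilde{\mb E}_1\otimes\tilde{\mb E}_2=\mb Q\big(\bm\kappa:\mb E_1\otimes\mb E_2\big)\mb Q^T$, discarding the extra term $\Tr\!\big(\mb Q^T\dot{\mb Q}\,\mb E_1\big)$ with the one-line justification ``$\Tr(\dot{\mb Q}\mb Q^T)\equiv 0$''; but that trace actually vanishes only when $\mb E_1$ is symmetric, a hypothesis the paper never states. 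Your route --- the index identity $\mathcal{R}_{abcd}=\mathcal{R}_{bacd}$, obtained from $\sum_i(\mb r_i)_a\partial_{(\mb r_i)_b}$ acting on lengths $|\mb r_m|,|\mb r_{mn}|$, whence $\bm\kappa:\mathcal{R}=\mb D:\mathcal{R}$ with the objective rate-of-strain $\mb D$ --- supplies precisely this missing symmetry (equivalently, it lets you take the left factors $\mb E_1^{(k)}$ symmetric) and makes the cancellation transparent. Both arguments leave the existence of a finite \emph{equivariant} dyadic decomposition $\mathcal{R}=\sum_{k=1}^{9}\mb E_1^{(k)}\otimes\mb E_2^{(k)}$ essentially as an assertion (the paper simply writes it down as an equality), and you rightly flag this as the residual issue; your appeal to the eigen-frame parametrization \eqref{eq:permutation_G} is the same device the paper invokes.
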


\begin{proof}
Without loss of generality, we represent the fourth order tensor by the following two bases
\begin{align}
&\mb F_1(\mb c)\otimes \mb F_2(\mb c)\otimes \mb F_3(\mb c)  
+ \mb F_3(\mb c)\otimes \left(\mb F_2(\mb c)\otimes \mb F_1(\mb c)\right)^{T_{\{2,3\}}}, 
&& \mb F_1({\mb c}), \mb F_3({\mb c}) \in \mathbb{R}^3, \mb F_2({\mb c}) \in \mathbb{R}^{3\times 3}, 
\nonumber \\
&\mb E_1(\mb c) \otimes \mb E_2(\mb c), 
&& \mb E_1({\mb c}), \mb E_2({\mb c}) \in \mathbb{R}^{3\times 3},
\end{align}
where the super-script $T_{\{2,3\}}$ represents the transpose between the 2nd and 3rd indices; also
$\mb F_1$, $\mb F_2$, $\mb F_3$, $\mb E_1$ and $\mb E_2$ satisfy the symmetry 
conditions
\begin{equation}
\begin{split}
&\mb F_1(\tilde{\mb c}) = \mb Q \mb F_1({\mb c}), \quad    
\mb F_3(\tilde{\mb c}) = \mb Q \mb F_3({\mb c}), \\
&\mb E_1(\tilde{\mb c}) = \mb Q \mb E_1({\mb c}) \mb Q^T, \quad
\mb E_2(\tilde{\mb c}) = \mb Q \mb E_2({\mb c}) \mb Q^T, \quad 
\mb F_2(\tilde{\mb c}) = \mb Q \mb F_2({\mb c}) \mb Q^T. 
\end{split}
\end{equation}

For the term $\mb E_1(\mb c) \otimes \mb E_2(\mb c)$, we have 
\begin{equation}
\bm\kappa:\mb E_1(\mb c)\otimes \mb E_2(\mb c)
= \Tr(\bm\kappa \mb E_1({\mb c})) \mb E_2({\mb c})
\end{equation}
and
\begin{equation}
\begin{split}
\tilde{\bm\kappa}:\tilde{\mb E}_1\otimes \tilde{\mb E}_2
\big\vert_{\textrm{frame} ~1}
&= \left(\mb Q\bm\kappa \mb Q^T + \dot{\mb Q}\mb Q^T \right):\left(\mb Q\mb E_1({\mb c})\mb Q^T 
    \otimes \tilde{\mb E}_2\right)\\
&= \Tr(\bm\kappa \mb E_1({\mb c}))\tilde{\mb E}_2 + \Tr(\dot{\mb Q}\mb Q^T \mb Q\mb E_1({\mb c})\mb Q^T) 
\tilde{\mb E}_2\\
&= \Tr(\bm\kappa \mb E_1({\mb c}))\tilde{\mb E}_2 \\
&\equiv 
\mb Q\left(
\bm\kappa:\mb E_1(\mb c)\otimes \mb E_2(\mb c)\big\vert_{\textrm{frame} ~2}\right)\mb Q^T,
\end{split}
\label{eq:G_1_2_rot_symmetry}
\end{equation}
where we have used $\Tr(\dot{\mb Q}\mb Q^T) \equiv 0$. 

For the term $\mb F_1(\mb c)\otimes \mb F_2(\mb c)\otimes \mb F_3(\mb c)  
+ \mb F_3(\mb c)\otimes \left(\mb F_2(\mb c)\otimes \mb F_1(\mb c)\right)^{T_{\{2,3\}}}$,   
we have
\begin{equation}
\bm\kappa:\mb F_1(\mb c)\otimes \mb F_2(\mb c)\otimes \mb F_3(\mb c) 
  = \mb F_2({\mb c}) ^T \bm\kappa \mb F_1 ({\mb c})\mb F_3({\mb c})^T
\end{equation}
and 
\begin{equation}
\tilde{\bm\kappa}:\tilde{\mb F}_1\otimes \tilde{\mb F}_2\otimes \tilde{\mb F}_3
= \mb Q\mb F_2({\mb c})^T \bm\kappa \mb F_1({\mb c}) \mb F_3({\mb c})^T \mb Q^T + \mb Q\mb F_2({\mb c})^T \mb Q^T \dot{\mb Q}
\mb F_1({\mb c})\mb F_3({\mb c})^T \mb Q^T.
\end{equation}
On the other hand, we note that 
\begin{equation}
\frac{\diff \tilde{\mb b} }{\diff t} \big\vert_{\textrm{frame} ~1} = 
\dot{\mb Q} \mb b \mb Q^T + \mb Q \mb b \dot{\mb Q}^T + \mb Q \frac{\diff \mb b}{\diff t}
\big\vert_{\textrm{frame}~2} \mb Q^T.
\end{equation}
To ensure the rotational symmetry of $\frac{\mathcal{D}\mb b}{\mathcal{D} t}$, we have 
\begin{equation}
\mb F_2 \equiv \mb I, \quad \sum_i \mb F_1^{(i)} \otimes \mb I \otimes \mb F_3^{(i)} = \sum_{j,k=1}^{N-1}\left\langle 
g^{(1)}_j(\mb r^{\ast}) g^{(2)}_k(\mb r^{\ast}) \mb r_j \otimes \mb I \otimes\mb r_k \right\rangle.
\label{eq:choice_F_1_2_3}  
\end{equation}
Hence, we have
\begin{equation}
\begin{split}
&\frac{\textrm d}{\textrm d t}\tilde{\mb c} - \tilde{\bm\kappa}:
\left.\left( \sum_i \tilde{\mb F}_1^{(i)}\otimes \tilde{\mb F}_2^{(i)} \otimes \tilde{\mb F}_3^{(i)}
+ \tilde{\mb F}_3^{(i)}\otimes \left(\tilde{\mb F}_2^{(i)} \otimes 
 \tilde{\mb F}_1^{(i)}\right)^{T_{\{2,3\}}}\right)
\right\vert_{\textrm{frame} ~1}\\
&\equiv  
\mb Q \left.\left(
\frac{\textrm d}{\textrm d t}\mb c -  \bm\kappa:
\left( \sum_i \mb F_1^{(i)}\otimes \mb F_2^{(i)} \otimes \mb F_3^{(i)}
+ \mb F_3^{(i)}\otimes \left(\mb F_2^{(i)} \otimes \mb F_1^{(i)}\right)^{T_{\{2,3\}}}\right)
\right)\right\vert_{\textrm{frame}~2} \mb Q^T.
\label{eq:F_1_2_3_rot_symmetry}
\end{split}
\end{equation}
Furthermore, using Eq. \eqref{eq:choice_F_1_2_3}, we obtain
\begin{equation}
\begin{split}
&\sum_i \mb F_1^{(i)}\otimes \mb F_2^{(i)} \otimes \mb F_3^{(i)}
+ \mb F_3^{(i)}\otimes \left(\mb F_2^{(i)} \otimes \mb F_1^{(i)}\right)^{T_{\{2,3\}}} \\
&= \sum_{j,k=1}^{N-1} \left\langle g^{(1)}_j(\mb r^{\ast}) g^{(2)}_k(\mb r^{\ast}) 
(\mb r_j\otimes\nabla_{\mb r_j} + \mb r_k\otimes\nabla_{\mb r_k}) 
\otimes
\mb r_j {\mb r_k}^T \right\rangle.
\label{eq:F_1_2_3_sum} 
\end{split}
\end{equation}
Accordingly, the remaining part of $\sum_{i=1}^{N-1} \left\langle \mb r_i \otimes \nabla_{\mb r_i}\otimes \mb b\right\rangle$ is 
expanded by 
\begin{equation}
\left\langle\sum_{i=1}^{N-1}\mb r_i\otimes\nabla_{\mb r_i} \sum_{j,k=1}^{N-1} g^{(1)}_j(\mb r^{\ast}) g^{(2)}_k(\mb r^{\ast}) 
\otimes \mb r_j \mb r_k^T \right\rangle
= 
\sum_{i=1}^9 
\mb E_1^{(i)}(\mb c) \otimes \mb E_2^{(i)}(\mb c).
\label{eq:G_1_2_sum}
\end{equation}

Combining Eq. \eqref{eq:F_1_2_3_rot_symmetry}, \eqref{eq:F_1_2_3_sum} and
\eqref{eq:G_1_2_sum}, we conclude that the decomposition 
\begin{equation}
\begin{split}
\sum_{i=1}^{N-1} \left\langle\mb r_i\otimes\nabla_{\mb r_i}\otimes \mb b\right\rangle 
&= \sum_{j,k=1}^{N-1} \left\langle g^{(1)}_j(\mb r^{\ast}) g^{(2)}_k(\mb r^{\ast})(\mb r_j\otimes\nabla_{\mb r_j} + \mb r_k\otimes\nabla_{\mb r_k})\right. 
\left.\otimes \mb r_j \mb r_k^T \right\rangle \\ 
&+ \sum_{k=1}^9 \mb E_{1}^{(k)} (\mb c) \otimes \mb E_{2}^{(k)} (\mb c)
\end{split}
\end{equation}
ensures the objectivity of the time-derivative of $\mb c$.
\end{proof}

\section{Symmetry-preserving neural network representation of the second-order tensor}
\label{sec:symmetry_preserving_DNN}
In the DeePN$^2$ model, we construct the NN representations of the second-order tensors for the stress $\mb G$, constitutive terms $\mb H_1$, $\mb H_2$, and objective tensor derivative terms $\mb E_1$ and $\mb E_2$ that satisfy the rotational symmetry conditions, i.e.,
\begin{equation}
\mb G(\widetilde{\mb c}_1, \cdots, \widetilde{\mb c}_n) = 
\mb Q \mb G(\mb c_1, \cdots, \mb c_n) \mb Q^T,
\label{eq:G_symmetry}
\end{equation}
where $\widetilde{\mb c}_i = \mb Q \mb c_i \mb Q^T$ and $\mb Q$ is an orthogonal matrix.

To preserve the rotational symmetry condition \eqref{eq:G_symmetry}, we fix the form of encoder $\mb b_1$ and transfer the learning to the eigen-space of $\mb c_1$.
Let us assume that the eigen-decomposition $\mb c_1= \mb U\Lambda \mb U^T$ has distinct eigenvalues, where $\mb U$ is the matrix whose columns are the eigenvectors 
of $\mb c_1$. $\mb U$ is not unique due to the non-uniqueness of the eigenvectors. Without loss of generality, we further assume that the first element of $\mb u_1$ to 
be positive. With the following lemma, we show that the general form of $\mb U$ can be always written as $\mb U^{(j)} := \mb U S^{(j)}$ with $j = 1, \cdots, 4$, where $S^{(j)}$ is given by
\begin{equation}
S^{(1)} = \begin{pmatrix} +1 & & \\
 &+1 & \\
 & &+1
\end{pmatrix}, 
S^{(2)} = 
\begin{pmatrix} +1 & & \\
 &-1 & \\
 & &+1
\end{pmatrix},
S^{(3)} = 
\begin{pmatrix} +1 & & \\
 &+1 & \\
 & &-1
\end{pmatrix}, 
S^{(4)} = 
\begin{pmatrix} +1 & & \\
 &-1 & \\
 & &-1
\end{pmatrix}.
\nonumber
\end{equation}

\begin{lemma}
For a symmetry matrix $M \in \mathbb{R}^{3\times 3}$, let $S_M$ denote the set of matrices with the transformation of $S^{(j)}$, i.e., $S_M := \left\{S^{(1)} M S^{(1)}, \cdots, S^{(4)} M S^{(4)}  \right\}$. For any $M^{(j)} :=     S^{(j)} M S^{(j)}  \in S_M$, $S^{(k)} M^{(j)} S^{(k)} \in S_M$, $1\le j, k \le 4$. Furthermore, $S_M$ can be constructed by $M^{(j)}$, i.e.,  $S_M \equiv \left\{S^{(1)} M^{(j)} S^{(1)}, \cdots, S^{(4)} M^{(j)} S^{(4)}  \right\}$.
\label{lemma:S_M}
\end{lemma}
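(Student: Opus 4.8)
The plan is to recognize that $\{S^{(1)}, S^{(2)}, S^{(3)}, S^{(4)}\}$ is precisely the set of diagonal matrices with entries in $\{+1,-1\}$ whose $(1,1)$ entry equals $+1$, and that this set forms an abelian group under matrix multiplication, isomorphic to the Klein four-group $(\mathbb{Z}/2\mathbb{Z})^2$. Each $S^{(j)}$ is diagonal, hence symmetric, and satisfies $(S^{(j)})^2 = I$, so $S^{(j)} = (S^{(j)})^{-1} = (S^{(j)})^T$. Since all the $S^{(j)}$ are diagonal they commute pairwise, and closure of the group means that for every pair $(j,k)$ there is a unique index $\ell = \ell(j,k) \in \{1,2,3,4\}$ with $S^{(j)} S^{(k)} = S^{(\ell)}$. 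I would establish these facts first; they are immediate by inspection of the four explicit matrices.

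With this in hand, the first assertion follows from a one-line computation: for $M^{(j)} = S^{(j)} M S^{(j)}$ and any $k \in \{1,\dots,4\}$,
\[
S^{(k)} M^{(j)} S^{(k)} = S^{(k)} S^{(j)} \, M \, S^{(j)} S^{(k)} = S^{(\ell)} M S^{(\ell)} = M^{(\ell)} \in S_M ,
\]
where I used the commutativity $S^{(j)} S^{(k)} = S^{(k)} S^{(j)} = S^{(\ell)}$ with $\ell = \ell(j,k)$ together with group closure. Note that symmetry of $M$ is not actually needed here.

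For the second assertion, I would observe that the map $k \mapsto \ell(j,k)$ defined by $S^{(k)} S^{(j)} = S^{(\ell)}$ is exactly right translation by $S^{(j)}$ in the group $\{S^{(1)},\dots,S^{(4)}\}$, hence a bijection of the index set $\{1,2,3,4\}$ onto itself. Therefore, as $k$ runs over $\{1,2,3,4\}$, the matrix $S^{(k)} M^{(j)} S^{(k)} = M^{(\ell(j,k))}$ runs over all of $\{M^{(1)}, M^{(2)}, M^{(3)}, M^{(4)}\} = S_M$, which is precisely the claimed identity $S_M = \{S^{(1)} M^{(j)} S^{(1)}, \dots, S^{(4)} M^{(j)} S^{(4)}\}$.

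The argument is entirely elementary and I do not expect a genuine obstacle beyond bookkeeping: the only point to state carefully is the identification of $\{S^{(j)}\}$ as an abelian group and the use of commutativity of diagonal matrices so that the conjugations $S^{(k)}(\cdot)S^{(k)}$ compose into a single conjugation $S^{(\ell)}(\cdot)S^{(\ell)}$. Once the Klein-four-group structure is made explicit, both parts of the lemma are simply instances of the standard fact that orbits of a group action partition the underlying set, here the orbit of $M$ under the action $M \mapsto S M S$ of $\{S^{(1)},\dots,S^{(4)}\}$.
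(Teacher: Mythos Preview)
Your proof is correct, and in fact cleaner than the paper's. The paper proceeds by direct inspection: it observes that conjugation by any $S^{(j)}$ leaves the diagonal of $M$ untouched and, by symmetry of $M$, reduces to tracking the sign changes on the three upper-triangular entries $(M_{12},M_{13},M_{23})$. It tabulates the four resulting sign patterns (each with either zero or two sign flips), checks by hand that applying any $S^{(k)}$ to any of these patterns again produces one of the four patterns, and then handles the second assertion by a short case distinction according to whether the off-diagonal entries have distinct absolute values. Your approach instead identifies $\{S^{(1)},\dots,S^{(4)}\}$ as an abelian group isomorphic to the Klein four-group and reads off both assertions as immediate consequences of closure and the bijectivity of translation. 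This buys you a one-line proof of each part, makes the symmetry hypothesis on $M$ visibly unnecessary, and explains \emph{why} the sign-pattern bookkeeping in the paper works. The paper's argument is more concrete but also more fragile (indeed the fourth sign pattern displayed there contains a typo); your group-theoretic rephrasing is the natural way to see the lemma.
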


\begin{proof}
By applying $S^{(j)}$ to $M$, it is easy to see that the diagonal part of $M^{(j)}$ remains the same. Since $M^{(j)}$ is also symmetric, we only need to check the upper-triangular part, taking the four possible operations
\begin{equation*}
 \begin{pmatrix} \ast &+ &+  \\
&\ast&+ \\
&&\ast \\
\end{pmatrix} \quad
 \begin{pmatrix} \ast&- &+  \\
&\ast&- \\
&&\ast \\
\end{pmatrix} \quad
 \begin{pmatrix} \ast&+ &-  \\
&\ast&- \\
&&\ast \\
\end{pmatrix} \quad
 \begin{pmatrix} \ast& -&+  \\
&\ast&- \\
&&\ast \\
\end{pmatrix},
\end{equation*}
where ``$+$'' represents that the  element remains the same and ``$-$'' represents a sign change. We see that number of ``$-$'' operations is either $0$ or $2$. Starting from any of the above choice for  $M^{(j)}$, all of the four operators yields either $0$ or $2$ ``$-$'' operations. Therefore, $S^{(k)} M^{(j)} S^{(k)} \in S_M$. 
%
Furthermore, if the upper triangular part of $M$ has distinct absolute
values, then $\forall M ^{(j)}$, $S ^{k} M ^{j} S ^{k} \neq S ^{k'} M
^{j} S ^{k'}$ with $k \neq k'$, hence $S _{M}$ can be constructed by
$M ^{j}$. Otherwise, if some upper triangular entries of $M$ share the
same absolute value, we can draw the same conclusion accordingly.

\end{proof}

Now we consider the matrix whose columns are  the eigenvectors of  $\tilde{\mb c}_1 = \mb Q \mb c_1 \mb Q^T$, denoted by $\tilde{\mb U}$. We can write $\tilde{\mb U} = \mb Q \mb U S^{(j)}$, where $j \in \{1, 2, 3, 4\}$. Accordingly, the DNN input of $\mb c_i$ takes the form
\begin{equation*}
\begin{split}
\tilde{\mb U}^T \tilde{\mb c}_i  \tilde{\mb U} &= \left(\mb Q \mb U S^{(j)}\right)^T
\mb Q \mb c_i\mb Q^T \left(\mb Q \mb U S^{(j)}\right) 
= S^{(j)} {\mb U}^T \mb c_i \mb U S^{(j)}.
\end{split}
\end{equation*}
Let $M= \mb U^T \mb c_i \mb U$,  by using Lemma \ref{lemma:S_M}, it is easy to see that  $S_{ \mb U^T \mb c_i \mb U}$ can be constructed by taking $j = 1, \cdots, 4$.
\begin{proposition}
Let $\mb U$ be the matrix whose columns are the eigenvectors of $\mb c_1$. Let the DNN input be $\hat{\mb c}_i^{(j)} = S^{(j)} {\mb U}^T \mb c_i \mb U S^{(j)} $. The following form of $\bm \tau _{\textrm{p}}$
\begin{equation}
\mb G(\mb c_1, \cdots, \mb c_n) = \revision{\frac{1}{4}} \sum_{j=1}^4 \mb U^{(j)} \hat{\mb G}(\hat{\mb c}_1^{(j)}, \cdots,  \hat{\mb c}_n^{(j)}) {\mb U^{(j)}}^T, \quad   \mb U^{(j)} = \mb U S^{(j)}.
\label{eq:four_U_evaluation}
\end{equation}
satisfies the rotational symmetry constraint \eqref{eq:G_symmetry}.
\end{proposition}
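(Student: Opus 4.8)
The plan is to reduce the frame-indifference identity \eqref{eq:G_symmetry} to two elementary facts and one short computation. The first fact is that the right-hand side of \eqref{eq:four_U_evaluation} is \emph{independent of which orthonormal eigenbasis of $\mathbf{c}_1$ is used to build it}: replacing $\mathbf{U}$ by $\mathbf{U}D$, for any diagonal $D$ with entries $\pm1$, leaves the value of the four-term sum unchanged. This is essentially Lemma~\ref{lemma:S_M} combined with the observation that each summand $\mathbf{U}^{(j)}\hat{\mathbf{G}}(\hat{\mathbf{c}}_1^{(j)},\dots,\hat{\mathbf{c}}_n^{(j)}){\mathbf{U}^{(j)}}^T$ is quadratic in the eigenvector matrix, hence invariant under the global flip $\mathbf{U}\mapsto-\mathbf{U}$; since conjugation by any $S^{(k)}$ merely permutes the set of DNN inputs $\{\hat{\mathbf{c}}_i^{(j)}\}_{j=1}^4 = S_{\mathbf{U}^T\mathbf{c}_i\mathbf{U}}$, the whole sum is the same whether built from $\mathbf{U}$, from $\mathbf{U}S^{(k)}$, or (via the global flip) from $-\mathbf{U}S^{(k)}$, which together exhaust all $2^3$ diagonal sign matrices $D$. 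The second fact is that $\mathbf{Q}\mathbf{U}$ is an orthonormal eigenbasis of $\widetilde{\mathbf{c}}_1 = \mathbf{Q}\mathbf{c}_1\mathbf{Q}^T$ with the same eigenvalues, in the same order, since $\widetilde{\mathbf{c}}_1(\mathbf{Q}\mathbf{U}) = \mathbf{Q}\mathbf{c}_1\mathbf{Q}^T\mathbf{Q}\mathbf{U} = \mathbf{Q}\mathbf{U}\Lambda$.

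Granting these, the argument runs as follows. Let $\widetilde{\mathbf{U}}$ denote the eigenvector matrix of $\widetilde{\mathbf{c}}_1$ used to evaluate $\mathbf{G}(\widetilde{\mathbf{c}}_1,\dots,\widetilde{\mathbf{c}}_n)$ via \eqref{eq:four_U_evaluation}. Because $\widetilde{\mathbf{c}}_1$ has distinct eigenvalues (the same as those of $\mathbf{c}_1$), both $\widetilde{\mathbf{U}}$ and $\mathbf{Q}\mathbf{U}$ are orthonormal eigenbases of $\widetilde{\mathbf{c}}_1$, so they differ only by column signs, i.e. $\mathbf{Q}\mathbf{U} = \widetilde{\mathbf{U}}D$ for some diagonal $\pm1$ matrix $D$. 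By the first fact, we may therefore evaluate $\mathbf{G}(\widetilde{\mathbf{c}}_1,\dots,\widetilde{\mathbf{c}}_n)$ using $\mathbf{Q}\mathbf{U}$ in place of $\widetilde{\mathbf{U}}$ without changing the result.

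It then remains to compute. Using the eigenbasis $\mathbf{Q}\mathbf{U}$, the DNN inputs become $S^{(j)}(\mathbf{Q}\mathbf{U})^T\widetilde{\mathbf{c}}_i(\mathbf{Q}\mathbf{U})S^{(j)} = S^{(j)}\mathbf{U}^T\mathbf{Q}^T(\mathbf{Q}\mathbf{c}_i\mathbf{Q}^T)\mathbf{Q}\mathbf{U}S^{(j)} = S^{(j)}\mathbf{U}^T\mathbf{c}_i\mathbf{U}S^{(j)} = \hat{\mathbf{c}}_i^{(j)}$, exactly the inputs appearing in $\mathbf{G}(\mathbf{c}_1,\dots,\mathbf{c}_n)$, since $\mathbf{Q}^T\mathbf{Q} = \mathbf{I}$. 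Likewise each outer conjugation becomes $(\mathbf{Q}\mathbf{U}S^{(j)})\hat{\mathbf{G}}(\hat{\mathbf{c}}_1^{(j)},\dots)(\mathbf{Q}\mathbf{U}S^{(j)})^T = \mathbf{Q}\bigl(\mathbf{U}^{(j)}\hat{\mathbf{G}}(\hat{\mathbf{c}}_1^{(j)},\dots){\mathbf{U}^{(j)}}^T\bigr)\mathbf{Q}^T$. Summing over $j = 1,\dots,4$, dividing by $4$, and factoring $\mathbf{Q}$ and $\mathbf{Q}^T$ out of the sum gives $\mathbf{G}(\widetilde{\mathbf{c}}_1,\dots,\widetilde{\mathbf{c}}_n) = \mathbf{Q}\,\mathbf{G}(\mathbf{c}_1,\dots,\mathbf{c}_n)\,\mathbf{Q}^T$, which is \eqref{eq:G_symmetry}.

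The main obstacle is the first fact: making precise that the four-term, factor-$1/4$ sum really is independent of the eigenbasis. The leading-entry sign convention on $\mathbf{u}_1$ trims the eight possible column-sign matrices to four, but it is not by itself sufficient — $\mathbf{Q}\mathbf{U}$ need not inherit that convention, so one genuinely needs both Lemma~\ref{lemma:S_M} (conjugation by any $S^{(k)}$ permutes $\{S^{(1)},\dots,S^{(4)}\}$ when acting on a symmetric matrix) and the global-sign invariance of the quadratic summands to conclude basis-independence across all eight sign patterns. A secondary point to verify is that the distinct-eigenvalue hypothesis is exactly what forces $\mathbf{Q}\mathbf{U}$ and $\widetilde{\mathbf{U}}$ to differ \emph{only} by column signs, with no residual rotation inside an eigenspace; eigenvalue crossings fall outside this proposition and are handled in the main text via the permuted form \eqref{eq:permutation_G} and the freezing threshold, to which the same argument extends by additionally averaging over the six eigenvalue permutations.
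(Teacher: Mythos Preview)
Your proof is correct and follows essentially the same approach as the paper: both arguments hinge on Lemma~\ref{lemma:S_M} to show that conjugating the DNN inputs by any $S^{(k)}$ merely permutes the set $\{\hat{\mb c}_i^{(j)}\}_{j=1}^4$, and then observe that $\widetilde{\mb U}$ and $\mb Q\mb U$ differ only by column signs so that the outer conjugation produces the required $\mb Q(\,\cdot\,)\mb Q^T$. Your write-up is in fact more careful than the paper's on one point: the paper asserts $\widetilde{\mb U}=\mb Q\mb U S^{(j)}$ for some $j\in\{1,\dots,4\}$ by invoking the first-entry-positive convention on $\mb u_1$, but as you note, $\mb Q\mb U$ need not satisfy that convention, so the relating sign matrix could be any of the eight diagonal $\pm1$ matrices; your explicit use of the global-sign invariance of the quadratic summands to cover the remaining four cases closes that gap cleanly.
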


Finally, to account for the swap of the eigenvectors when the eigenvalues cross over, we consider the $6$ permutations of the three eigenvalues of $\mb c_1$, i.e., 
\begin{equation}
\mb G(\mb c_1, \cdots, \mb c_n) = \revision{\frac{1}{24}} \sum_{k=0}^5 \sum_{j=1}^4 \mb U^{(j,k)} \hat{\mb G}
(\hat{\mb c}_{1}^{(j,k)}, \cdots,  \hat{\mb c}_{n}^{(j,k)}) {\mb U^{(j,k)}}^T,
\end{equation}
where $k$ represents the rank of  permutation (e.g., in lexicographical order)
\revision{and $\mb U^{(j,k)}$ is a variation of $\mb U^{(j)} $ with corresponding column permutation.}

\section{Validation of the rotational-symmetry preserving NN representation}
\label{sec:symmetry_validation}
To validate the performance of the proposed DNN representation, we check the accuracy of the modeling terms given a set of conformation tensors $\mb c_1, \cdots, \mb c_n$ under
different unitary transformations. Fig. \ref{fig:accuracy_eigen_policy} shows the relative error under each transformation. The DNN representation \eqref{eq:four_U_evaluation}
yields the same results under all the transformation. In contrast, the DNN without accounting for the four transformations yields significant error due to
the non-uniqueness of the eigenvectors of $\mb c_1$.

\begin{figure}
 \centering
 \includegraphics[width = 0.45\textwidth]{./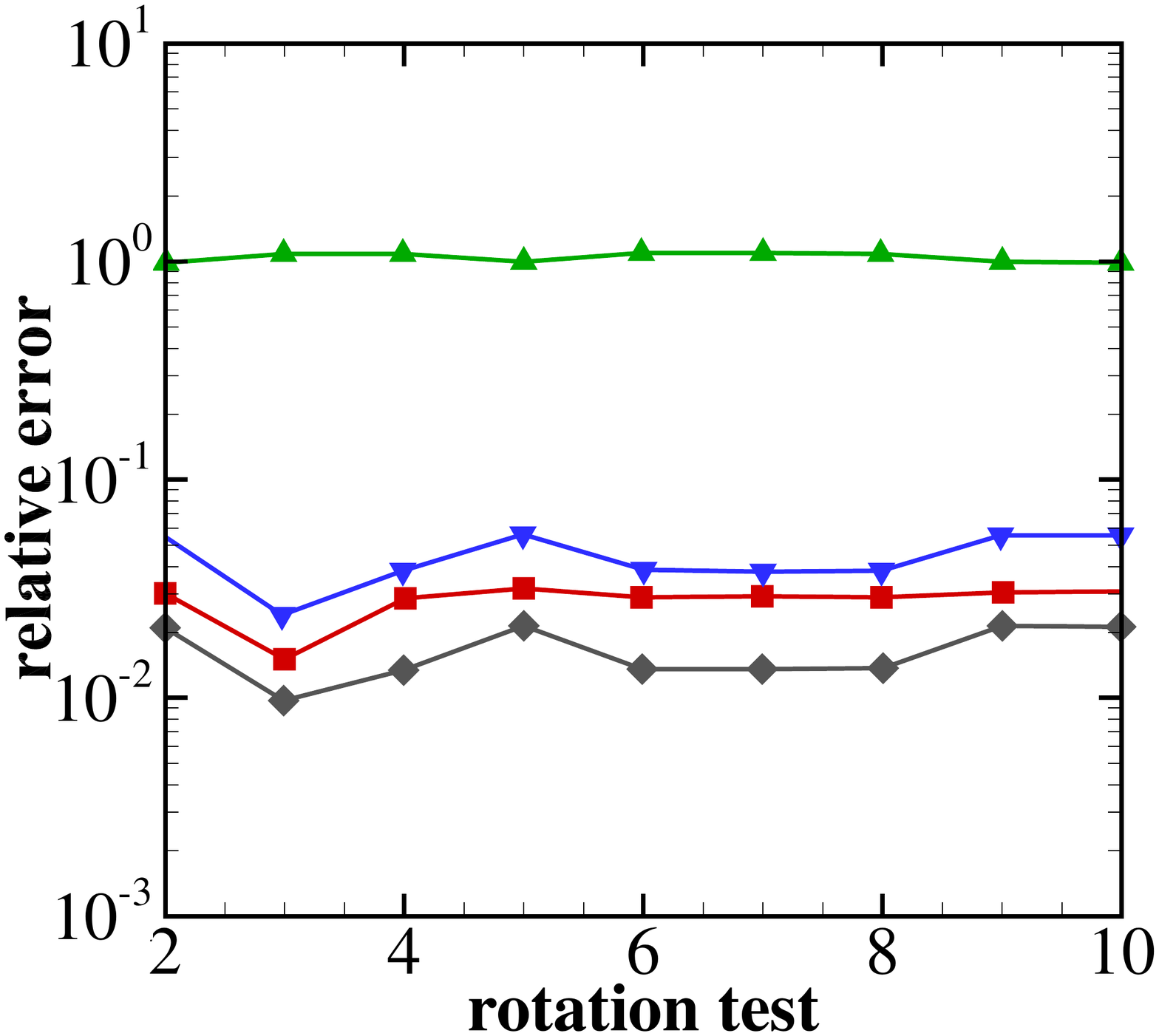}
 \includegraphics[width = 0.45\textwidth]{./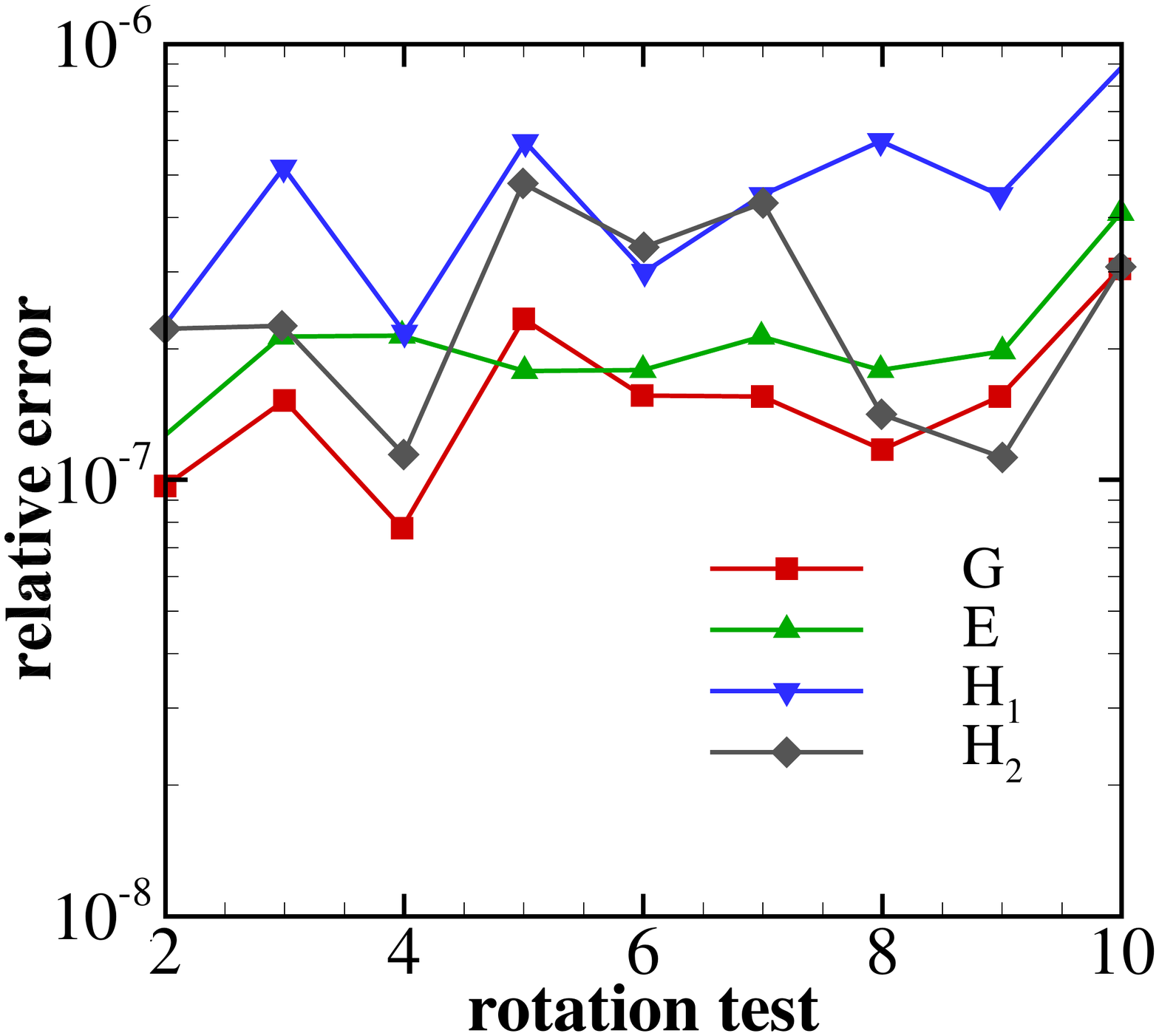}
 \caption{The relative $l_{\infty}$ error of the model prediction under randomly chosen orthogonal transformations  without (left) and with (right) accounting for the four eigen-space transformations in Eq. \eqref{eq:four_U_evaluation}.}
 \label{fig:accuracy_eigen_policy}
\end{figure}

In addition, we examine the 2D Taylor-Green vortex flow where the evolution of $\mb c_1$ becomes degenerate at certain points. 
Fig. \ref{fig:stress_c_o24_X45_Y37} 
shows the stress evolution at $(45, 37)$. At $t = 1080$, the eigenvalues $\lambda_2$ and $\lambda_3$ cross over. 
Concurrently, the prediction of the polymer stress $\bm\tau _{\textrm{p}}$ from the model without considering the swap of $\mb u_2$ and $\mb u_3$ 
shows apparent deviations near the regime as shown in Fig. \ref{fig:stress_c_o24_X45_Y37}. In contrast, the prediction from the 
model retaining the eigenvalue permutation trained by Eq. \eqref{eq:permutation_G} shows good agreement with the MD results.  

\begin{figure}[htpb]
    \centering
     \includegraphics[width =0.45 \textwidth]{./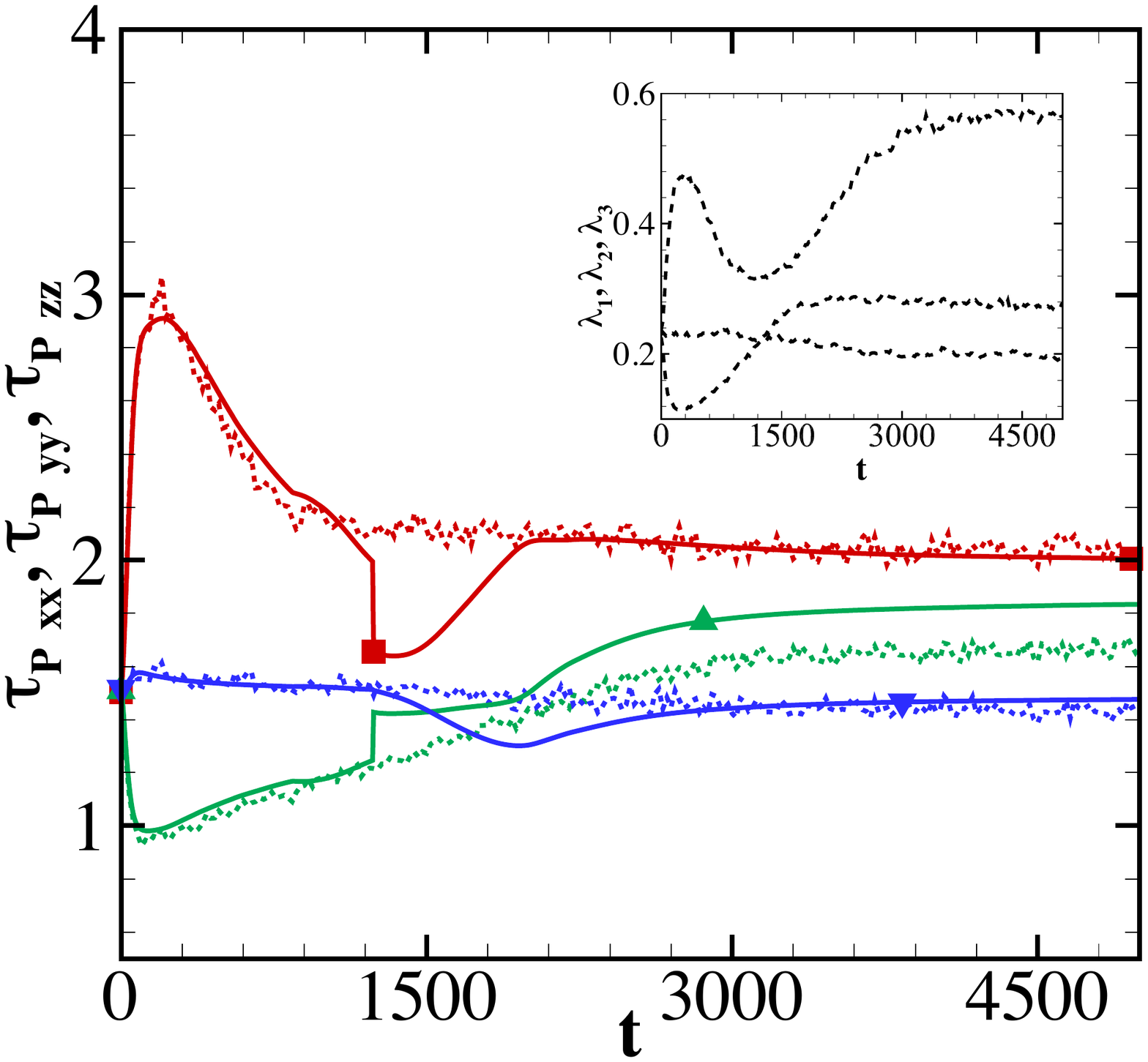}
    \includegraphics[width = 0.45\textwidth]{./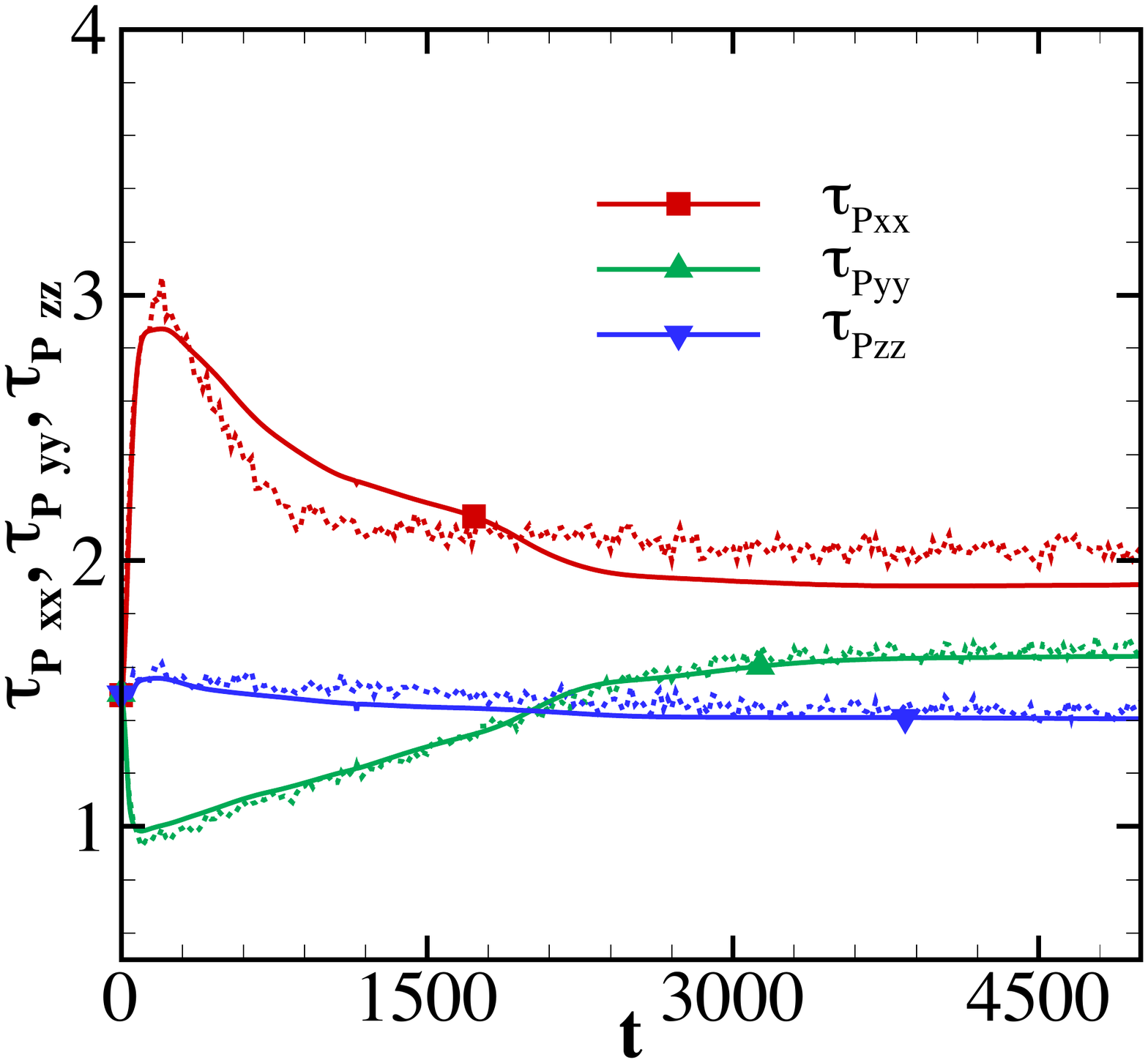}
    \caption{Stress evolution of the Taylor-Green vortex flow at position $(45, 37)$ of the chain-shaped molecule suspension. Left: prediction without considering the swap of eigenvectors when the two eigenvalues approaches near $t = 1255$ as shown in the inset plot. Right: predictions from the model retaining the eigenvalue permutation trained by Eq. \eqref{eq:permutation_G}. The dashed and the solid lines denote the micro-scale simulations and the DeePN$^2$ predictions, respectively.}
    \label{fig:stress_c_o24_X45_Y37}
  \end{figure}


\section{Micro-scale model of the polymer solutions}
\label{sec:micro_scale}
In the present study, we consider suspensions with three different polymer structures as shown in Fig. \ref{fig:molecule_shape}. 
Each polymer molecule consists of $N = 7$ beads connected with $N_b$ FENE bonds, i.e.,
\begin{equation}
V(\mb q) = \sum_{j=1}^{N_b} V_b\left(\vert \mb q_{j_1} - \mb q_{j_2}\vert\right), \quad 
V_b (l) =-\frac{k_s}{2}l^2_{0} \log\left[ 1-\frac{l^2}{l^2_{0}}\right],
\end{equation}
where $k_s$ represents the spring constant and $l_0$ is the maximum of the extension length. The chain- and star-shaped molecules have $N_b = 6$ bonds with the same bond parameters $k_s = 0.1$ and $l_0 = 2.3$ (in reduced unit). 
The net-shaped molecule is similar to the star-shaped molecule with the same parameters for the first $6$ bonds; $3$ additional
bonds connect the side chain particles with $k_s = 0.1$ and $l_0 = 3.7$. The polymer number density of the three 
suspensions is $n_{\textrm p} = 0.3$. The solvent is modeled by the dissipative particle dynamics (DPD) \cite{Hoogerbrugge_SMH_1992, Groot_DPD_1997} with number density $n_s = 4.0$.
The pairwise interaction between particle $i$ and $j$ takes the standard form
\begin{align}
\mb{F}_{ij} &= \mathbf{F}_{ij}^C +  \mathbf{F}_{ij}^D +   \mathbf{F}_{ij}^R, &
\mathbf{F}_{ij}^C &= 
\begin{cases}
a (1.0 - r_{ij}/r_c) \mathbf{e}_{ij}, & r_{ij} < r_c \\
    0, &r_{ij} > r_c
\end{cases}, \nonumber
\\
\mathbf{F}_{ij}^D &= 
\begin{cases}
-\gamma w^{D}(r_{ij})(\mathbf{v}_{ij} \cdot \mathbf{e}_{ij}) \mathbf{e}_{ij}, & r_{ij} < r_c\\
    0, &r_{ij} > r_c
\end{cases}, &
\mathbf{F}_{ij}^R &= 
\begin{cases}
\sigma w^{R}(r_{ij}) \xi_{ij} \mathbf{e}_{ij}, & r_{ij} < r_c\\
    0, &r_{ij} > r_c
\end{cases}, \nonumber 
\end{align}
where $\mathbf{r}_{ij} = \mathbf{r}_{i} - \mathbf{r}_{j}$, $r_{ij} = |\mathbf{r}_{ij}|$, $\mathbf{e}_{ij} = 
\mathbf{r}_{ij}/r_{ij}$, and $\mathbf{v}_{ij} = \mathbf{v}_{i} - \mathbf{v}_{j}$,
$\xi_{ij}$ are independent identically distributed (i.i.d.) Gaussian random variables with zero mean and unit 
variance. $\gamma$ and $\sigma$ are related  with the system temperature by the second fluctuation-dissipation 
theorem \cite{Espanol_SMO_1995} as $\sigma^2 = 2 \gamma k_BT$, where $k_BT$ is set to $0.25$. The detailed 
parameters are given in Tab. \ref{tab:polymer_model_parameter}.
\begin{table}[htbp]
\centering
\caption{Parameters (in reduced unit) of the micro-scale model of the polymer solution (S-solvent, P-polymer). }
\begin{tabular*}
{0.45\textwidth}{c @{\extracolsep{\fill}} cccccc}
\hline\hline
&  $a$ & $\gamma$ & $\sigma$ & $k$ & $r_c$\\
\hline
\text{S-S} & $4.0$ & $5.0$ & $1.58$ &$0.25$ & $1.0$\\ 
\text{S-P} & $0.0$ & $40.0$ & $4.47$ &$0.0$ & $1.0$\\ 
\text{P-P} & $4.0$ & $0.01$ & $0.071$ &$1.0$ & $0.7$\\
\hline
\label{tab:polymer_model_parameter}
\end{tabular*}
\end{table}

\section{Collecting training samples}
\label{sec:training_samples}
Collecting training samples is one of the most important steps in the construction of DeePN$^2$.
To obtain reliable models, we need to ensure that the training sample set is representative enough of all
the practical situations that the model is intended for.
In the present study, we collect the training samples in shear flow with shear rate $\dot{\gamma} \in [0, 0.09]$. 
Since the training of the DeePN$^2$ model only requires discrete polymer configurations rather than time-series samples, one
convenient approach is to consecutively increase the shear rate and collect the discrete configurations during the shear extension
and relaxation process, 
where the inclusion of the relaxation process can facilitate the sampling of polymer configuration
phase space due to the viscoelastic hysteresis effect. $32000$ samples are collected where each sample consists of $5000$ polymer configurations, which will be employed
to evaluate the constitutive dynamics terms $\left\langle \cdot \right\rangle$. Due to the permutation symmetry of the 
the particle label, the effective number of configurations per sample is $1\times 10^4$ for the chain-shaped molecule 
and $3\times 10^4$ for the star- and net-shaped molecules.

\section{Training procedure}
\label{sec:training}
The DeePN$^2$ model is constructed via the training of the NN representations of the encoder mappings $\left\{g_j(\mb r^{\ast})\right\}_{j=1}^{n}$,
stress model $\mb G$,  evolution dynamics $\left\{\mb H_{1,j}\right\}_{j=1}^n$, $\left\{\mb H_{2,j}\right\}_{j=1}^n$  
and the $4$th order tensors $\left\{\mathcal{E}_j\right\}_{j=1}^{n}$ of the objective tensor derivatives. In this study, we choose $n=3$ encoders and fix $g_1(\mb r^{\ast}) \equiv 1$. For the chain-shaped molecule, we set $w_{1,i} = 1 - i / N, 1\le i \le N-1$ and $\sum_i w_{1,i}\mb r_i$ represents the orientation between the free-end particle and the center of mass. For the star- and net-shaped molecules, we set $w_{1,1} = 1$ and $w_{1,i} = 0$ for $i \ge 2$. All terms
are represented by the fully connected NN. The number of hidden layers are set to be { $(120,  120, 120)$, $(300, 300, 300)$, 
$(400, 400, 400)$, $(450, 450, 450)$,  $(560, 560, 560)$}, respectively. The activation function is taken
to be the hyperbolic tangent. We emphasize that the mappings $\left\{g_j(\mb r^{\ast})\right\}_{j=1}^{n}$ and weights 
$w \in \mathbb{R}^{n\times(N-1)}$ involve in the training process for the joint learning of  
the encoders $\left\{\mb b_j(\mb r)\right\}_{j=1}^{n}$ defined in Eq. \eqref{eq:encoder_model} and the macro-scale features $\left\{\mb c_j\right\}_{j=1}^{n}$, although
they do not appear explicitly in the macro-scale hydrodynamic equations.

The DNNs are trained by the Adam stochastic gradient descent method \cite{Kingma_Ba_Adam_2015} for $20$ epochs, using
$5$ samples per batch size. The initial learning rate is $2.8\times 10^{-4}$ and decay rate is $0.75$ per $20000$ steps.

\revision{Similar to Ref. \cite{Lei_Wu_E_2020}, the loss function is defined by
\begin{equation}
L = \lambda_{G} L_{G} + \lambda_{H_1} L_{H_1} + \lambda_{H_2} L_{H_2} + \lambda_{\mathcal{E}} L_{\mathcal{E}},
\nonumber
\end{equation}
where $\lambda_{G} = 0.2$, $\lambda_{H_1} = 0.1$, $\lambda_{H_2} = 0.6$ and $\lambda_{\mathcal{E}} = 0.1$ are hyperparameters. For each
training batch of $m$ training samples,  $L_{G}$, $L_{H_1}$, $L_{H_2}$, $L_{\mathcal{E}}$ of the system are given by }
\begin{equation}
\begin{split}
L_{G} &= \sum_{l=1}^m \sum_{i=1}^{n}\left\Vert {\mb G}_{i}(\mb c^{(l)})
- \left\langle \sum _{k=1}^{N_b} \mb r_k \otimes \nabla_{\mb r_k}  V \right\rangle^{(l)} \right\Vert^2  \\
L_{H_1} &= \sum_{l=1}^m \sum_{i=1}^{n}\left\Vert {\mb H}_{1,i}(\mb c^{(l)})
- \left\langle \sum_{j,k=1}^{N-1}  A_{jk} \nabla_{\mb r_j} \cdot 
\nabla_{\mb r_k} \mb b_i \right\rangle^{(l)} \right\Vert^2  \\
L_{H_2} &= \sum_{l=1}^m \sum_{i=1}^{n}\left\Vert {\mb H}_{2,i}(\mb c^{(l)})
- \left\langle \sum_{j=1}^{N-1} \sum_{k=1}^{N_b}  A_{jk} \nabla_{\mb r_k} V \cdot \nabla_{\mb r_j} 
\mb b_i \right\rangle^{(l)} \right\Vert^2  \\
L_{\mathcal{E}} &= \sum_{l=1}^m \sum_{i=1}^{n} \left\Vert \sum_{s=1}^{9}{\mb E}_{1,i}^{(s)} (\mb c^{(l)})    
\otimes {\mb E}_{2,i}^{(s)} (\mb c^{(l)})
- \left\langle\sum_{k=1}^{N-1}\mb r_k\otimes\nabla_{{\mb r_k}} g_i^2 
\otimes \sum_{j,j'=1}^{N-1}w_{ij}w_{ij'} {\mb r_j} {\mb r_{j'}}^T \right\rangle^{(l)}
\right \Vert^2,
\end{split}
\end{equation}
where $\Vert \cdot \Vert^2$ denotes the total sum of squares of the entries in the tensor, and $\mb c^{(l)} = (\mb c_1^{(l)}, \cdots, \mb c_n^{(l)})$.

\begin{acknowledgments}
We thank Lei Wu and Liyao Lyu for helpful discussions.
The work of Fang, Ge, and Lei is supported in part by the Extreme Science and Engineering Discovery Environment (XSEDE) Bridges at the Pittsburgh Supercomputing Center through allocation MTH210005 and the High Performance Computing Center at Michigan State University. Fang acknowledges the support from Shanghai Jiao Tong University during 08/2020--07/2021. The work of E is supported in part by a gift from iFlytek to Princeton University.
\end{acknowledgments}


%

\end{document}